\newif\ifECdraft
\newtheorem{claim}{Claim}
\newtheorem{theorem}{Theorem}
\newtheorem{definition}{Definition}[section]
\newtheorem{corollary}[theorem]{Corollary}
\newtheorem{lemma}{Lemma}
\newtheorem{claim}{Claim}
\newcommand{\NOTE}[1]{\textcolor[rgb]{0.1,0.6,.1}{NOTE: {#1}}}
\newcommand{\TODO}[1]{\textcolor[rgb]{1,0.5,.1}{{#1}}}
\newcommand{\marginTODO}[1]{%
  \marginpar{%
    \framebox[0.9\marginparwidth][t]{%
      \parbox[t]{0.85\marginparwidth}{%
        \raggedright\scriptsize{#1}%
   }}}%
%%   \stepcounter{todos}%
}
\newcommand{\TODO}[1]{}
\newcommand{\NOTE}[1]{}
\newcommand{\marginTODO}[1]{}
\newcommand{\uglyparagraph}[1]{\smallskip\noindent{\bf{#1}}}
\renewcommand{\paragraph}{\uglyparagraph}
\setlist[itemize]{leftmargin=*}
\setlist[enumerate]{leftmargin=*}
\newenvironment{proofsketch}{\noindent{{\em Proof sketch:~}}}{\hfill $\Box$ \smallbreak}
\newcommand{\namedref}[3]{\hyperref[#2:#3]{#1~\ref*{#2:#3}}}
\newcommand{\sectionref}[1]{\namedref{Section}{sec}{#1}}
\newcommand{\theoremref}[1]{\namedref{Theorem}{thm}{#1}}
\newcommand{\definitionref}[1]{\namedref{Definition}{def}{#1}}
\newcommand{\lemmaref}[1]{\namedref{Lemma}{lem}{#1}}
\newcommand{\figureref}[1]{\namedref{Figure}{fig}{#1}}
\newcommand{\tableref}[1]{\namedref{Table}{tab}{#1}}
\newcommand{\NN}{\ensuremath{\mathbb N}\xspace}
\newcommand{\LitSet}{{\ensuremath{{\text{LitSet}}}}\xspace}
\renewcommand{\P}{\ensuremath{\mathcal P}\xspace}
\tikzset{%
auto,
 pnode/.style = {rectangle,draw,fill=white,rounded corners=5pt,minimum size=1cm},
 bnode/.style = {rectangle,draw,ultra thick,gray,fill=white,minimum size=1cm}
}
\begin{document}
% Title portion. Note the short title for running heads 
\title{Does Nash Envy Immunity?}  

\ifECdraft
\author{Ching-Hua Yu}
	\affiliation{%
		\institution{University of Illinois, Urbana-Champaign}
		\department{Department of Computer Science}
		\city{Urbana}
		\state{IL}
		\postcode{61801}
		\country{USA}
	}

% note that the abstract must come before \maketitle
\begin{abstract}
\small

%In the work, we study the stability of strategy profiles in non-cooperative games. 
The most popular stability notion in games should be Nash equilibrium under the rationality of players who maximize their own payoff individually. In contrast, in many scenarios, players can be (partly) irrational with some unpredictable factors. Hence a strategy profile can be more robust if it is resilient against certain irrational behaviors. In this paper, we propose
a stability notion that is resilient against envy. %which is a notorious cause of irrational behaviors 
%of deeply hidden humanity 
%that arises from comparing with other people. 
%We propose a stability notion called envy-proofness. An envy player would not satisfy with her current state if she discovers that she can gain more utility than others, or she would loss less utility than others by changing her strategy.
%An envy-proof strategy profile is stable with respect to this agitation. That is, A player cannot gain a competitive edge on the utility change over the other players.
A strategy profile is said to be \emph{envy-proof} if by deviation, each player cannot gain a competitive edge with respect to the change in utility over the other players.
Together with Nash equilibrium and another stability notion called immunity, we show how these separate notions are related to each other, whether they exist in games, and whether and when a strategy profile satisfying these notions can be efficiently found. We answer these questions by starting with the general two player game and extend the discussion for the approximate stability and for the corresponding fault-tolerance notions in multi-player games.

%A strategy profile is immune if , and it is envy-proof if
 
%We present a notion of stability for strategy profiles called differential envy-proofness. 
 
  %The study of Part I. is started from Gradwohl and Reingold's work on fault tolerance \cite{GradwohlRe14}; the study of Part II is enlightened by Kearns et al.'s view on games with incomplete information \cite{KearnsPaRoUl13}; and the study of Part III is related to Lykouris et al.'s recent paper of games with dynamic population. \cite{LykourisSyTa16} 

 %In Part II, we discuss a stability notion called envy-proofness, motivated by irrational behavior of human nature (or explained as incomplete information). The topic circles around three notions in games: Nash equilibria, immunity, and envy-proofness. We illustrate how these notions are related to each other, whether they exist in games, and when a strategy profile with one or multiple of these notions can be efficiently found.
\normalsize
\end{abstract}

\maketitle

% note: this command has been disabled to remove the ACM copyright block. Sorry...
%\thanks{This work is supported by the National Science Foundation,  under grant CNS-0435060, grant CCR-0325197 and grant EN-CS-0329609.}

\else
\author{Ching-Hua Yu
	\thanks{Computer Science Department, University of Illinois, Urbana-Champaign, cyu17@illinois.edu.}
}

\maketitle

\thispagestyle{empty}
\newpage

 \thispagestyle{empty}
 \tableofcontents
 \thispagestyle{empty}
 \pagestyle{empty}
 \newpage

\pagestyle{plain}
\fi

% Renew this after \maketitle if the default list of authors is too long for headers
%\renewcommand{\shortauthors}{W.\ Vickrey et.\ al.}

%In this paper, we study a stability notion of strategy profiles called differential envy-proofness (or envy-proofness for short). The notion is proposed to characterize the potential unstable factor beyond Nash equilibria in games. Consider a somehow irrational player whose happiness is not about his own utility but about the comparison with the changes in the other player's utility. No matter their current utility is the same or substantially different, the player would not satisfy if she finds that she can gain more utility than others or she would loss less utility than others by changing her strategy.
%An envy-proof strategy profile is stable with respect to this agitation. That is, A player cannot gain a competitive edge on the utility change over the other players. 

%This stable notion is independent from Nash equilibria. 
%This work study envy-proofness from several aspects.

\section{Introduction}

%In \cite{KearnsPaRoUl13}, Kearns et al., uses a somewhat trust proxy to carry out equilibria of an complete information games in settings of incomplete information (in case the game is large). Whether the proxy is somewhat trustable or can be implemented without trustness is one question. However, real situations can be worse, sometimes the utility functions of players are inherently hard to measure or predict. Especially when we consider the utility function of a person, since a person can sometime take actions irrationally (probably due to some hidden variable), the outcome of a game can deviate from the predicted given all the players are rational. This is the motivation to discuss stable equilibria that are robust against several non-ideal situations. 

%even there is a trusted third party. While we construct a math model to fit real-world applications, the fundamental measurement bases that we used are usually just an approximate one under the believe at the time. For example, hire of employee, grades of students, 

%We study an inherent yet unpredictable phenomenon of humanity - the notorious envy\footnote{In English, malicious envy is more like jealousy, but here we adopt a general term \textit{envy} from the philosophy.} of humanity, and characterize irrational behaviors of players with respect to envy in games. 
We study an inherent yet unpredictable factor of players in games that is related to the notorious envy\footnote{In English, malicious envy is more like jealousy, but here we adopt a general term \textit{envy} from the philosophy.} of humanity.
It is observed that in a company, the (un)happiness of employees can depend on, instead of their own salary,
%employee can be unhappy/happy not because of their salary but 
the comparison between their salary and their colleagues'. In some scenarios, the situation can be worse. For example, a (somehow) irrational bidder in an auction can set a bid higher than the true value of an item just because he is unhappy to see his rival bidder win the item. 

In general, the outcome of games can deviate from the predicted equilibria, and the players' behavior can disobey their presumed utility function. Even though one remedy is to redefine the utility function and make it as accurate as possible, sometimes the utility function of a real player is hard to predict or to accurately model, and due to the complex humanity, sometimes faulty players can just become spontaneous or unwilling to follow a reasonable utility under a model. Hence it is attractive if an equilibrium has more robust properties against these inherent factors and can prevent potential irrational behavior of players.

%From another point of view, while Nash equilibria is the most well-accepted notion of stability in games

To capture the inherent irrational factor due to envy, we define a property called \emph{envy-proofness}.
An envious player would not be satisfied with her current state if she discovers that she can gain more utility "than others", or she would lose less utility "than others" by changing her strategy.
An envy-proof strategy profile is stable with respect to this agitation. That is, A player cannot gain a competitive edge with respect to the utility change over the other players. Note that an envy-proof state may not be \emph{envy-free}: A player in an envy-proof state can envy the other's utility compared to her own, but she has no incentive to deviate.
Envy-proofness is independent of Nash equilibria in general; a Nash equilibrium can be envy-proof or not. 
Let's look at its definition for two-player games first.

%We will focus on two-player game first before extending the discussion for approximation and for multi-player games.

%which we call \textit{envy-proof}. Regardless of the equilibrium, either pure/mixed Nash equilibrium, correlated equilibrium, coarse correlated equilibrium, etc, we define an envy-proof profile separately. 
%As envy is out of the comparison of utilities between players, we have the following notion for convenience:
%Let's look at the two-player game first. The essential idea is that the gain caused by deviation of a player cannot be higher than the gain of the other player due to the same deviation, formally defined as follows:

\begin{definition}[Envy-proof Profile]
	In a two-player game $G=(\{u_0,u_1\},\{A_0,A_1\})$, we say a strategy profile $\mathbf{x}= (\mathbf{x_0}, \mathbf{x_1})$, a random variable over $A_0\times A_1$, is envy-proof if and only if the following holds:	
	\begin{align}\label{eq:envyfree0}
	&\notag \forall b, \forall x_b'\in A_b,
	u_b(\mathbf{x_b'}:\mathbf{x_{\bar{b}}})-u_b(\mathbf{x})\leq 	 u_{\bar{b}}(\mathbf{x_b'}:\mathbf{x_{\bar{b}}})-u_{\bar{b}}(\mathbf{x}), \\
	&\notag (\text{where the notation } u_b(\mathbf{x})=E_{x\sim \mathbf{x}}[u_b(x)]), \\
	&\notag \text{abbreviated as} \\
	& \ \ \ \ \ \ \ \ \ \ \ \ \ \ \ \ \ \ \ \ \ \ 
	\forall b, \triangle_{b}u_b(\mathbf{x})\leq \triangle_{b}u_{\bar{b}}(\mathbf{x})
	\end{align}
\end{definition}

\begin{figure}[ht] \large
	\centering
	\begin{tabular}{|l|l|l|}
		\hline
		& movie & shopping \\ \hline
		movie    & 4,4   & 1,3      \\ \hline
		shopping & 3,1   & 3,3      \\ \hline
	\end{tabular}
	\caption{A coordination game. (movie, movie) is a Nash equilibrium but not envy-proof, while (shopping, shopping) is an envy-proof Nash equilibrium.}
	\label{fig:envyfree}
\end{figure}

As an example, consider the coordination game described in \figureref{envyfree}. The two players like shopping no mater whether they do it alone or together. On the other hand, they only like to go to the movies together. While (movie, movie) and (shopping, shopping) both sound like a good coordinated strategy profile, one day, the two players quarrel with each other, behave irrationally, and even try to make each other unhappy regardless self-happiness. Then (movie, movie) is no longer a good strategy since it is unstable. It is unstable because each player has an incentive to change the strategy, and assuming the other's strategy is fixed, this deviation can cause the other player to lose more happiness (losing 3 points vs losing 1 point). On the other hand, (shopping, shopping) remains a good strategy because any unilateral deviation of a player can only cause his own loss.  

%Hence we are curious about a serious question such that whether there an envy-proof profile in a game, whether we find such profile efficiently, how envy-proofness is related to other stability notions in game, and whether we can have  an envy-proof Nash equilibrium in a game, etc.

In addition to envy-proofness, we study another 
%type of resilience against Byzantine faults 
stability notion called immunity. 
%that is defined by Gradwohl and Reingold  \cite{GradwohlRe08,GradwohlRe14} in a fault-tolerance genreㄛ
%
In \cite{GradwohlRe08,GradwohlRe14}, Gradwohl and Reingold proposed and studied  $(\epsilon,t)$-immunity for special games in a fault-tolerance genre. Essentially a strategy profile is $(\epsilon,t)$-immune if
%players' utility do not decrease more than $\epsilon$ when any $t$ other players deviate arbitrarily. 
%No $t$ players' deviation can make any other player's expected utility decrease more than than $\epsilon$ (in absolute value).
any $t$ players cannot deviate and make some other player's expected utility decrease more than $\epsilon$ (in absolute value).
They showed that in some special games (which we will review briefly later), every Nash equilibrium is $(\epsilon,t)$-immune. 
While we will adapt the envy-proof notion to fault-tolerance issues in multi-layer games later in \sectionref{multiplayer},
%and bring out the connection with $(\epsilon,t)$-immunity in \cite{GradwohlRe08,GradwohlRe14},
we focus more on the fundamental properties such as the existence and the computational complexity of exact as well as approximate solutions in general games. Adapting the notion of immunity to two player games, we say a strategy profile is immune if each player's unilateral deviation cannot decrease the other player's utility.

\begin{definition}[Immunity Profile]
	In a two-player game $G=(\{u_0,u_1\},\{A_0,A_1\})$, we say a strategy profile $\mathbf{x}$ is immune if and only if:
	\begin{equation*}\label{eq:twoImmune0}
	\forall b, \forall x_b'\in A_b, u_{\bar{b}}(x_b':\mathbf{x_{\bar{b}}})\geq u_{\bar{b}}(\mathbf{x}),
	\end{equation*}
\end{definition}

This paper is then centered around the three stability notions: Nash equilibria, envy-proofness, and immunity and developed from multiple dimensions. First we start by inspecting general two-player games and illustrate a clean connection between the three stability notions in \sectionref{connection}. And in \sectionref{existence}, we study the (non)existence of these notions. We show that while every two-player game has at least an immune strategy profile and an envy-proof strategy profile, an immune Nash equilibrium and an envy-proof Nash equilibrium do not always exist.
% even in simple two-payer games. 
Then following these observations regarding existence, in \sectionref{solvability} we are concerned about whether they can
be efficiently found in a game. We show that while finding an immune profile is \emph{PPAD-complete}, finding an envy-proof strategy profile is polynomial-time solvable.

Based on the above discoveries, we check advanced properties of envy-proof Nash equilibria and immune Nash equilibria. In \sectionref{EnvyImmune}, somehow surprisingly, we show while determining the existence of envy-proof Nash equilibria is \emph{NP-complete}, determining the existence of immune Nash equilibria is polynomial-time solvable. Next, because of the PPAD-hardness of finding an immune strategy profile and the NP-hardness of determining the existence of an envy-proof Nash equilibrium, in \sectionref{approx}, we relax the definition of immunity and envy-proofness to resort to  approximate solutions. Finally in \sectionref{multiplayer}, we discuss the multi-player fault-tolerance version of these notions. Particularly we show possitive results of $(\epsilon,t)$-coalitional envy-proofness for $\gamma$-varied games,
%; meanwhile, analog to the results of $(\epsilon,t)$-immunity in \cite{GradwohlRe14}, we also discuss the notion in 
$\gamma$-sensitive games, and anonymous games.

%We then discuss the connection between the three notions: Nash equilibria, envy-proofness, and immunity, and 

%In \cite{GradwohlRe08, GradwohlRe14}, Gradwohl and Reingold studied another type of resilience against Byzantine faults called immunity. For a two player game, a strategy profile is immune if a player's deviation cannot decrease the other player's utility. For a multi-player game, Gradwohl and Reingold studied 

\section{Related Works and Discussion}\label{sec:relatedwork}
%\subsection{Envy and Envy-Freeness in The History}
\subsection{Terminology Disambiguation for Envy-Related Notions}

Envy-related notions appeared a few times in several areas. In 1958, Gamow and Stern \cite{GamowSt58} introduce an envy-free concept for fair cake-cutting and chore division.  
Essentially the envy-free concept means that each partner believes that their share is at least as great as any other share. Note that the envy-freeness in the problem is quite different from our envy-proofness for general games. Particularly instead of comparing the utilities of the players, we want to compare the change in utilities of the players. That is, the utility between players can be substantially different but a strategy profile would be envy-proof if by any unilateral deviation of a player, the increase of her utility is less or the decrease of her utility is more than the others'. %Second, instead of particular problems or special games, we discuss the notion as a pervasive stability concept in general games.

%We observe that the notion of envy-proofness we want to capture here is a little different. Particularly, instead of comparing the utilities of the players, the envy-proofness notion here compares the change in utilities of the players.

%In the generalized second-price (GSP) auction, an envy-proof bid profile, saying that each bidder has no incentive to change its bid to potentially getting a different slot with a different price, is a recapture/reiteration of the equilibrium.

The term envy-freeness and local envy-freeness also appeared in the problem of generalized second-price (GSP) auction, given by Varian \cite{Varian07} and Edelman et al. \cite{EdelmanOsSc07} independently.
The concept there is to capture the utility optimization of rational players (while excludes some plausible equilibria that cannot provide a good utility). In contrast, the envy-proofness in this paper is motivated by irrational players. Particularly an envy-free bid profile in GSP is necessarily an equilibrium, while an envy-proof profile in a game is not necessary a Nash equilibrium. For example, an irrational bidder in an auction can deviate to decrease his payoff if this can decrease his rival bidder' payoff more.

\subsection{Irrational Players in Games}

While the rationality of the players is a primal assumption in game theory, recently variant points of view are proposed to study or to explain irrational player behaviors in the computer science literature in recent years. 

\paragraph{Fault Tolerance.} One of the closest directions to our work is about fault tolerance of Nash equilibria against certain kinds of malicious attacks in large games. Following the line, Bernheim et al. \cite{BernheimBeWh87}, Kalai \cite{Kalai04}, Gradwohl and Reingold \cite{GradwohlRe10,GradwohlRe14} showed that Nash equilibria are robust against certain kind of attacks in general games satisfying specific properties. Particularly, Gradwohl and Reingold \cite{GradwohlRe14} showed that an honest player's payoff in a Nash equilibrium remains almost the same when the number of Byzantine faulty players is less than a threshold. The kind of tolerance is called immunity. 

An instant question is that even though approximate immune Nash equilibria exist in games with a special form, it is unclear whether exact/approximate immune profiles or immune Nash equilibria exist in general games and whether it is computationally hard to find a solution. We answered these questions in this work, and so in some sense, we provide a fundamental inspections for the complement points following the line. Also, we study a new notion called envy-proofness that enriches the player's behaviors between fully rational and fully malicious. Interestingly, we will show Nash equilibria, immunity, and envy-proofness are different but relative notions in these questions.

\paragraph{Spitefulness.} Another related direction is about the spiteful/altruistic setting. The setting says that a player's "true" utility is a combination of his own and the others. 
%To name a few experimental and theoretical works in the setting,
%See , just to name a few recent works, and 
(See Chen\cite{PAChen11thesis} and Chen and Micali{ChenMicali16} for detailed reviews and references on this line.)
One major difference between their direction and ours is that the spiteful/altruistic setting assumes certain knowledge of the external factors (called externalities) that decides the true utility function. 
This is a necessary step in their approaches since their goals are about how the profit of an auction problem or the social welfare of a game would compromise due to these externalities and how an (approximate) solution of an optimization problem according to the redefined utilities can be obtained.

In contrast, we consider scenarios whether the external factors are unpredictable and focus on issues about Nash equilibria tolerant to irrational players. That is, we assume the metric of the given utility functions are accepted by all the parties and then study equilibria robust against (arbitrarily) malicious or envious behaviors of the players. Nevertheless, we believe that the robustness of Nash equilibria studied in this paper can potentially extend to related problems in their setting. Also, while we show general impossibility and computational hardness results of some of our stability notions, it would be interesting to inspect on more specific problems, as those in this line, for the future studies.

%Although the direction take on a different aspect of problems from ours, further studies are encouraged to extend the discussions of the robust Nash equilibria to specific games concerned in the line.
%We believe that our studies on properties of the equilibria for general games can 

Also, the \emph{price of malice}, defined as the ratio between the social welfare when there are malicious players and the social welfare when there is none, was studies by Moscibroda et al. \cite{MoscibrodaScWa10} in a virus inoculation game and by Babaioff et al. \cite{MalicePapadimitriou07} and Roth \cite{MaliceRoth08} in congestion games. However their works are incomparable to ours since we leave out questions about social welfare for now and focus on questions about equilibria themselves.

\paragraph{Rational Cryptography.}  In contrast to a standard assumption where each player is either honest in following the protocol or arbitrarily malicious, rational cryptography concerns scenarios involving both rational and Byzantine players. See \cite{RationalVadhan09,RationalMicali14,RationalMicali13,RationalMicali12,RationalKatz12}, to name a few, and surveys by Nielsen \cite{SurveyNielsen07} and Katz \cite{SurveyKatz09}. On one hand, this setting helps the implementation of protocols go beyond the impossibility regarding fairness in the standard setting. On the other hand, it brings new problems in protocol designs that characterize the complex features of player behaviors. 

The tolerance threshold of Byzantine players remains an important issue in secure multiparty computation with rational/partial-rational players, and the design of such protocols usually involve the equilibrium properties (instead of social welfare). Therefore the fundamental studies of the existence, complexity, and approximation of the strategy profiles with respect to the stability notions in the paper has a potential for future development in secure protocols under a rich composition of players.

%\section{Related Work}

%\paragraph{Envy-Free} The term of \emph{envy-free} appeared in 

\section{Definition}

\begin{definition}[Game]
	A normal-form game is described by $G=(\{u_i\}_{i\in[m]},\{A_i\}_{i\in[m]})$ as follows:
	\begin{itemize}
		\item $m$ indicates that there are $m$ players $\P_1,...,\P_m$.
		\item $u_i$ is the utility function of $\P_i$.
		\item $A_i$ is a finite set of $\P_i$'s  actions.
	\end{itemize}
	
\end{definition}

When $A_i$'s are all the same, we also denote the game by $G=(\{u_i\}_{i\in[m]},A)$. We use a bold font $\mathbf{x}$ to denote a random variable and w.l.o.g. a distribution over a given space. 
We will slightly abuse the notation for the expected value of a utility function $u$ by writing $u(\mathbf{x})=E_{x\sim \mathbf{x}}[u(x)]$.% for the expected value of a function $f$ over the randomness of $\mathbf{x}$.

\begin{definition}[Strategy Profile] For a game $G=(\{u_i\}_{i\in[m]},\{A_i\}_{i\in[m]})$, a pure strategy profile is a vector $x=(x_1,...,x_m)$ in $\bigotimes_{j=1}^m A_j$, and a mixed strategy profile (or profile for short) is a random variable $\mathbf{x}=(\mathbf{x_1},...,\mathbf{x_m})$ over $\bigotimes_{j=1}^m A_j$.
\end{definition}

We will also denote a strategy profile by $(\mathbf{x}_{-i}:\mathbf{x_i'})$, where $\mathbf{x}_{-i}$ is a product distribution over $\bigotimes_{j\neq i}^m A_j$, and $(\mathbf{x}_{-i}:\mathbf{x_i'})
=(\mathbf{x_i'}:\mathbf{x}_{-i})=(\mathbf{x}_1,...,\mathbf{x_{i-1}},\mathbf{x_{i}'},\mathbf{x_{i+1}}...,\mathbf{x_m})$

\begin{definition}[Nash Equilibrium]
	Let $G=(\{u_i\}_{i\in[m]},\{A_i\}_{i\in[m]})$ be a game. We say a strategy profile $\mathbf{x}$ is a (mixed) Nash equilibrium if and only if:
	\begin{equation*}\label{eq:twoNash}
	\forall i, \forall x_i'\in A_i, u_i(x_i':\mathbf{x_{-i}})\leq u_i(\mathbf{x}), 
	\end{equation*}
\end{definition}

For two-payer games, we will denote the players by $\P_0$, $\P_1$ and the game by $G=(\{u_0,u_1\},\{A_0,A_1\})$. Similarly, a profile $(\mathbf{x_0},\mathbf{x_1})$ indicates that $\P_0$ plays $\mathbf{x_0}$, and $\P_1$ plays $\mathbf{x_1}$, while a profile
$(\mathbf{x_b}:\mathbf{x_{\bar{b}}})$, the same as $(\mathbf{x_{\bar{b}}}:\mathbf{x_b})$, means $\P_b$ plays $\mathbf{x_b}$ and $\P_{\bar{b}}$ plays $\mathbf{x_{\bar{b}}}$ for some $b\in\{0,1\}$.

%action<->pure strategy, (interchangeable)

\section{Warm-up for Immunity and Envy-proofness}\label{sec:connection}

We first discuss the connection between the three notions: Nash Equilibrium, immunity, and envy-proofness in two-player games. Interestingly, their connection is very clean. 
Let's formalize and compare the definitions of these notions in general two-player games:

\begin{definition}[Immunity for 2-Player Games\footnote{We note that if we adapt the formula of $(\epsilon,t)$-immunity by Gradwohl and Reingold \cite{GradwohlRe14} to two-player games without the relaxation $\epsilon$, it would be $u_{\bar{b}}(x_b':\mathbf{x_{\bar{b}}})\leq u_{\bar{b}}(\mathbf{x})$ instead. However, we insist on our form that captures the idea that a Byzantine player has no incentive to deviate from an immune profile if he cannot decrease his rival player's payoff by doing so.}]
	In a two-player game $G=(\{u_0,u_1\},\{A_0,A_1\})$, we say a strategy profile $\mathbf{x}$ is immune if and only if:
	\begin{equation*}\label{eq:twoImmune}
	\forall b, \forall x_b'\in A_b, u_{\bar{b}}(x_b':\mathbf{x_{\bar{b}}})\geq u_{\bar{b}}(\mathbf{x}).
	\end{equation*}	
	
\end{definition}

\begin{definition}[Envy-proofness for 2-Player Games]
	In a two-player game $G=(\{u_0,u_1\},\{A_0,A_1\})$, we say a strategy profile $\mathbf{x}$ is envy-proof if and only if	
	\begin{align}\label{eq:envyfree}
	&\notag \forall b, \forall x_b'\in A_b,
	u_b(\mathbf{x_b'}:\mathbf{x_{\bar{b}}})-u_b(\mathbf{x})\leq 	 u_{\bar{b}}(\mathbf{x_b'}:\mathbf{x_{\bar{b}}})-u_{\bar{b}}(\mathbf{x}), \\
	%&\notag \text{abbreviated as} \\
	%& \ \ \ \ \ \ \ \ \ \ \ \ \ \ \ \ \ \ \ \ \ \ 
	&\text{abbreviated as \ } \forall b, \triangle_{b}u_b(\mathbf{x})\leq \triangle_{b}u_{\bar{b}}(\mathbf{x}).
	\end{align}
\end{definition}

\paragraph{Compare envy-proofness with immunity.} Immunity is a stability notion against a complete irrational adversary who can be arbitrarily malicious. She would take any action if it can cause a loss of the other player. On the contrary, envy-proofness is against a partial rational player who is envious about the other player's payoff and compare their payoffs all the time. We strengthen that the goal of envy-proofness is not to divide a cake equally or to make everyone happy (which is a too strong requirement in many real cases). Instead, envy-proofness captures the notion that each player has no incentive to change her mind even if she is an envious player.

\paragraph{Compare envy-proofness with Nash equilibria.} While a Nash equilibrium says that a rational player $\P_b$ has no incentive to deviate if $u_b(x_b':\mathbf{x_{\bar{b}}})\leq u_b(\mathbf{x})$, an envy-proof profile suggests that an envious player has no incentive to deviate if $u_b(x_b':\mathbf{x_{\bar{b}}})-u_{\bar{b}}(x_b':\mathbf{x_{\bar{b}}})\leq u_b(\mathbf{x})-u_{\bar{b}}(\mathbf{x})$, which replaces the utility function $u_b(\cdot)$ in the Nash with the difference of the utilities, $(u_b-u_b')(\cdot)$.

\begin{theorem}\label{thm:NashEnvyImmune}
 For a 2-player game $G=(\{u_0,u_1\},\{A_0,A_1\})$, 	an immune Nash equilibrium is envy-proof.
\end{theorem}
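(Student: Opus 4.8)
The plan is to read off the envy-proof inequality $\triangle_{b}u_b(\mathbf{x})\leq \triangle_{b}u_{\bar{b}}(\mathbf{x})$ by sandwiching both sides around $0$: the Nash condition forces the left-hand side to be nonpositive, while immunity forces the right-hand side to be nonnegative. So the whole argument reduces to invoking each hypothesis once and chaining the resulting inequalities through $0$.

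Concretely, I would fix a player index $b\in\{0,1\}$ and an arbitrary pure deviation $x_b'\in A_b$. Since $\mathbf{x}$ is a Nash equilibrium, instantiating the Nash inequality with $i=b$ and $x_i'=x_b'$ gives $u_b(x_b':\mathbf{x_{\bar{b}}})\leq u_b(\mathbf{x})$, i.e.\ $\triangle_{b}u_b(\mathbf{x})=u_b(x_b':\mathbf{x_{\bar{b}}})-u_b(\mathbf{x})\leq 0$. Since $\mathbf{x}$ is immune, instantiating the immunity inequality with the same $b$ and $x_b'$ gives $u_{\bar{b}}(x_b':\mathbf{x_{\bar{b}}})\geq u_{\bar{b}}(\mathbf{x})$, i.e.\ $\triangle_{b}u_{\bar{b}}(\mathbf{x})=u_{\bar{b}}(x_b':\mathbf{x_{\bar{b}}})-u_{\bar{b}}(\mathbf{x})\geq 0$. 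Concatenating, $\triangle_{b}u_b(\mathbf{x})\leq 0\leq \triangle_{b}u_{\bar{b}}(\mathbf{x})$, which is exactly the defining inequality of envy-proofness for the pair $(b,x_b')$. As $b$ and $x_b'$ were arbitrary, $\mathbf{x}$ is envy-proof.

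The only thing to check is that the quantifiers line up: all three notions are stated over pure unilateral deviations $x_b'\in A_b$ against a fixed opponent mixed strategy $\mathbf{x_{\bar{b}}}$, so there is no need to pass between pure and mixed deviations, and the two hypotheses apply verbatim. There is essentially no obstacle here — the statement is immediate once the envy-proof condition is written in the $\triangle$-notation. It is worth remarking afterwards that both hypotheses are genuinely needed: a Nash equilibrium that fails immunity, such as (movie, movie) in \figureref{envyfree}, need not be envy-proof, and likewise an envy-proof or immune profile need not be a Nash equilibrium, so this implication is the only one of its kind among the three notions.
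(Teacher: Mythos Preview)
Your proof is correct and follows exactly the same route as the paper: use the Nash condition to get $\triangle_{b}u_b(\mathbf{x})\leq 0$, use immunity to get $\triangle_{b}u_{\bar{b}}(\mathbf{x})\geq 0$, and chain through $0$. The additional remarks about quantifiers matching up and about the necessity of both hypotheses are fine but go slightly beyond what the paper records.
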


\begin{proof}
Let $\mathbf{x}$ be an immune Nash equilibrium. $\forall b, \forall x_b'$, a Nash equilibrium states
\begin{equation*}
u_b(x_b':\mathbf{x_{\bar{b}}})- u_b(\mathbf{x})\leq 0, 
\end{equation*}
and an immune profile suggests
\begin{equation*}
u_{\bar{b}}(x_b':\mathbf{x_{\bar{b}}})- u_{\bar{b}}(\mathbf{x})\geq 0,
\end{equation*}
which together imply
\begin{equation*}
\triangle_{b}u_b(\mathbf{x})\leq 0 \leq \triangle_{b}u_{\bar{b}}(\mathbf{x}).
\end{equation*}
%\qed
\end{proof}

\section{On (Non)-Existence}\label{sec:existence}

We know that every finite game admits at least one Nash equilibrium \cite{Nash50}, but how about the immune profile and the envy-proof profile? Also, does every game admit an immune Nash equilibrium or at least an envy-proof Nash equilibrium? Here we answer these questions for two-player games. Particularly, we show that while every two-player game admits at least one Nash equilibrium, one immune profile, and one envy-proof profile, they are not necessarily to cover each other. That is, some games do not admit any envy-proof Nash equilibrium.

%\begin{theorem}\label{thm:Nash}(by John Forbes Nash \cite{Nash50})
%	Every finite game admits a Nash equilibrium.
%\end{theorem}
	
\begin{lemma}\label{lem:immune}
	Every finite two-player game admits an immune profile.
\end{lemma}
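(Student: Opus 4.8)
The plan is to exhibit an explicit immune profile in any finite two-player game. Recall that a profile $\mathbf{x}$ is immune iff neither player, by a unilateral deviation, can strictly decrease the \emph{other} player's utility. The natural candidate is a profile in which each player $\P_b$ already plays an action (or mixture) that \emph{minimizes} the opponent $\P_{\bar b}$'s utility, so that no deviation of $\P_b$ can push $u_{\bar b}$ any lower. Concretely, I would fix pure actions $a_0 \in A_0$ and $a_1 \in A_1$ and ask: when is the pure profile $(a_0, a_1)$ immune? It is immune precisely when $u_1(a_0', a_1) \geq u_1(a_0, a_1)$ for all $a_0' \in A_0$ and $u_0(a_0, a_1') \geq u_0(a_0, a_1)$ for all $a_1' \in A_1$; that is, $a_0$ minimizes $u_1(\cdot, a_1)$ over $A_0$ and simultaneously $a_1$ minimizes $u_0(a_0, \cdot)$ over $A_1$.

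First I would reformulate this as a fixed-point / pure-equilibrium statement for the auxiliary two-player game $G' = (\{u_0', u_1'\}, \{A_0, A_1\})$ in which each player's payoff is the \emph{negative of the opponent's} payoff in $G$: set $u_0'(a_0,a_1) := -u_1(a_0,a_1)$ and $u_1'(a_0,a_1) := -u_0(a_0,a_1)$. A pure Nash equilibrium of $G'$ is exactly a pure profile in which $\P_0$ best-responds to $a_1$ by minimizing $u_1(\cdot, a_1)$ and $\P_1$ best-responds to $a_0$ by minimizing $u_0(a_0, \cdot)$ — which is precisely the immunity condition above. Hence it suffices to argue that $G'$ admits a pure Nash equilibrium. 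But $G'$ need not have a pure Nash equilibrium in general, so I would instead appeal to Nash's theorem to get a \emph{mixed} equilibrium $(\mathbf{x_0}, \mathbf{x_1})$ of $G'$ and check that it is immune in $G$: in a mixed Nash equilibrium of $G'$, $\mathbf{x_0}$ minimizes $E[u_1(\cdot, \mathbf{x_1})]$ over all mixed strategies (in particular over all of $A_0$, by linearity of expectation), and symmetrically for $\mathbf{x_1}$; therefore for every $x_0' \in A_0$ we have $u_1(x_0':\mathbf{x_1}) \geq u_1(\mathbf{x_0}, \mathbf{x_1}) = u_1(\mathbf{x})$, and likewise $u_0(x_1':\mathbf{x_0}) \geq u_0(\mathbf{x})$, which is exactly the definition of an immune profile.

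So the key steps, in order, are: (1) define the "opponent-negating" auxiliary game $G'$; (2) invoke Nash's existence theorem to obtain a mixed Nash equilibrium $\mathbf{x}$ of $G'$; (3) unpack the equilibrium conditions of $G'$ using linearity of expectation to show they coincide with the immunity inequalities for $G$, concluding that $\mathbf{x}$ is immune in $G$. The main obstacle — really the only subtlety — is step (3): being careful that "best response among mixed strategies" in $G'$ genuinely implies "no pure deviation in $G$ lowers the opponent's payoff," i.e. that it is enough to range over $x_b' \in A_b$ rather than over all mixtures, which follows because the expected utility is affine in $\P_b$'s mixing weights, so its minimum over the simplex is attained at a vertex and the equilibrium value is already $\leq$ every vertex value. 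A secondary point worth a sentence is noting why we cannot simply take a pure profile: the minimizers are coupled ($a_0$'s optimal choice depends on $a_1$ and vice versa), so a pure solution may fail to exist, which is exactly why the detour through Nash's theorem (rather than a direct combinatorial argument) is needed.
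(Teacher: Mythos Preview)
Your proposal is correct and is essentially identical to the paper's proof: the paper also defines the auxiliary game $G'=(\{-u_1,-u_0\},\{A_0,A_1\})$, observes that a Nash equilibrium of $G'$ satisfies exactly the immunity inequalities for $G$, and invokes Nash's existence theorem. The only difference is that the paper states this in three lines without your preliminary exploration of pure profiles or the explicit linearity remark in step~(3).
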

\begin{proof}
Let $G=(\{u_0,u_1\},\{A_0,A_1\})$ be a 2-player game. An immune profile $\mathbf{x}$ requires 
\begin{equation*}
\forall b, \forall x_b', u_{\bar{b}}(x_b':\mathbf{x_{\bar{b}}})\geq u_{\bar{b}}(\mathbf{x}),
\end{equation*}
Define a game $G'=(\{-u_1,-u_0\},\{A_0,A_1\})$. A Nash equilibrium $\mathbf{y}$ of $G'$ satisfies
\begin{equation*}
\forall b, \forall y_b', -u_{\bar{b}}(y_b':\mathbf{y_{\bar{b}}})\leq -u_{\bar{b}}(\mathbf{y}).
\end{equation*}
Hence $\mathbf{y}$ must be an immune profile of $G$. %The proof is concluded by applying \theoremref{Nash}.
Since there is always a Nash equilibrium in $G'$, there is also an immune profile in $G$.
%\qed
\end{proof}	

\begin{lemma}\label{lem:envyfree}
	Every finite two-player game admits an envy-proof profile.
\end{lemma}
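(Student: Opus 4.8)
The plan is to follow the template of \lemmaref{immune}: reduce envy-proofness of a profile in $G$ to the Nash-equilibrium property of the \emph{same} profile in a suitable auxiliary game, and then invoke the existence of Nash equilibria in finite games.

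First I would rewrite the envy-proof condition. For each $b$ and each $x_b'\in A_b$, the inequality $\triangle_b u_b(\mathbf{x})\le\triangle_b u_{\bar{b}}(\mathbf{x})$ expands to $u_b(\mathbf{x_b'}:\mathbf{x_{\bar{b}}})-u_b(\mathbf{x})\le u_{\bar{b}}(\mathbf{x_b'}:\mathbf{x_{\bar{b}}})-u_{\bar{b}}(\mathbf{x})$; collecting the two deviation terms on one side and the two profile terms on the other gives
\[
  \big(u_b-u_{\bar{b}}\big)(\mathbf{x_b'}:\mathbf{x_{\bar{b}}})\ \le\ \big(u_b-u_{\bar{b}}\big)(\mathbf{x}).
\]
In words: from an envy-proof profile no player can increase the quantity ``my payoff minus the other player's payoff'' by a unilateral deviation.

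Second, I would introduce the auxiliary game $G'=(\{v_0,v_1\},\{A_0,A_1\})$ with $v_b:=u_b-u_{\bar{b}}$ (so $v_0=-v_1$, i.e.\ $G'$ is zero-sum, although this extra structure is not needed). By linearity of expectation, $v_b(\mathbf{y})=u_b(\mathbf{y})-u_{\bar{b}}(\mathbf{y})$ for every profile $\mathbf{y}$, in particular for $\mathbf{y}=\mathbf{x}$ and for $\mathbf{y}=(\mathbf{x_b'}:\mathbf{x_{\bar{b}}})$. Hence the displayed inequality above is exactly the condition $\forall b\,\forall x_b',\ v_b(\mathbf{x_b'}:\mathbf{x_{\bar{b}}})\le v_b(\mathbf{x})$, which is precisely the definition of a Nash equilibrium of $G'$. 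Therefore $\mathbf{x}$ is envy-proof in $G$ if and only if $\mathbf{x}$ is a Nash equilibrium of $G'$.

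Finally, since $G'$ is a finite game it has a Nash equilibrium, and by the equivalence just established that profile is an envy-proof profile of $G$. I do not anticipate a genuine obstacle here: the only care needed is the sign bookkeeping when passing from the $\triangle_b$ form to the $v_b$ form, and the (routine) use of linearity of expectation to justify treating the difference $u_b-u_{\bar{b}}$ as a single utility function, so that the standard Nash-existence theorem applies verbatim.
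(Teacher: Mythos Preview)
Your proposal is correct and is essentially identical to the paper's own proof: both define the auxiliary game $G'=(\{u_0-u_1,\,u_1-u_0\},\{A_0,A_1\})$ and observe that a Nash equilibrium of $G'$ is exactly an envy-proof profile of $G$, then appeal to Nash existence. Your extra remark that $G'$ is zero-sum is not needed here but is precisely what the paper later exploits in \lemmaref{envyfreeEasy}.
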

\begin{proof}
	Let $G=(\{u_0,u_1\},\{A_0,A_1\})$ be a 2-player game. From \eqref{eq:envyfree}, an envy-proof profile $\mathbf{x}$ says
	\begin{equation*}
		\forall b, \triangle_{b}u_b(\mathbf{x})\leq \triangle_{b}u_{\bar{b}}(\mathbf{x})
	\end{equation*}
	Define $G'=(\{u_0-u_1,u_1-u_0\},\{A_0,A_1\})$. A Nash equilibrium $\mathbf{y}$ of $G'$ says
	\begin{equation*}
	\forall b, \forall y_b', u_b(y_b':\mathbf{y_{\bar{b}}})-u_{\bar{b}}(y_b':\mathbf{y_{\bar{b}}})\leq u_b(\mathbf{y})-u_{\bar{b}}(\mathbf{y}).
	\end{equation*}
	Hence $\mathbf{y}$ must be an envy-proof profile of $G$. %Then the proof is concluded by applying \theoremref{Nash}.
	Since there is always a Nash equilibrium in $G'$, there is also an envy-proof profile in $G$.
%\qed
\end{proof}

\begin{figure*}[!tp] \large
	\centering
	\begin{tabular}{|l|l|l|}
		\hline
		& movie & shopping \\ \hline
		movie    & 4,4   & 1,3      \\ \hline
		shopping & 3,1   & 0,0      \\ \hline
	\end{tabular}
	\caption{A coordination game. (movie, movie) is the only Nash equilibrium while (shopping, shopping) is the only envy-proof strategy profile.}
	\label{fig:nonexistenvy}
\end{figure*}

\begin{lemma}\label{lem:NoEnvyFreeNash}
	There are two-player games which do not admit any envy-proof Nash equilibrium.
\end{lemma}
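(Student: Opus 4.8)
The plan is to exhibit a concrete game in which the unique Nash equilibrium is not envy-proof and the unique envy-proof profile is not a Nash equilibrium, so the two sets are disjoint. The game in Figure~\ref{fig:nonexistenvy} is the natural candidate: it is the coordination game from the introduction with the $(\text{shopping},\text{shopping})$ payoff lowered to $0,0$. I would first argue that $(\text{movie},\text{movie})$ is the unique Nash equilibrium. Since the game is symmetric, it suffices to check that no other pure profile is a Nash equilibrium (in $(\text{movie},\text{shopping})$ player~$0$ wants to switch to shopping since $3>1$; in $(\text{shopping},\text{movie})$ player~$0$ wants to switch to movie since $4>3$; in $(\text{shopping},\text{shopping})$ player~$0$ wants to switch to movie since $3>0$) and that there is no properly mixed Nash equilibrium. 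For the latter, if player~$\bar b$ plays movie with probability $p\in(0,1)$, player~$b$'s payoff from movie is $4p+1(1-p)=3p+1$ and from shopping is $3p+0(1-p)=3p$, so movie strictly dominates shopping for player~$b$ against any mixed strategy of the opponent; hence in any Nash equilibrium both players play movie, which pins down $(\text{movie},\text{movie})$ uniquely.

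Next I would show that $(\text{movie},\text{movie})$ is not envy-proof: letting player~$0$ deviate to shopping against $\mathbf{x_1}=\text{movie}$ gives $\triangle_{0}u_0(\mathbf{x})=3-4=-1$ and $\triangle_{0}u_1(\mathbf{x})=1-4=-3$, violating $\triangle_{0}u_0(\mathbf{x})\le\triangle_{0}u_1(\mathbf{x})$. Therefore this game has no envy-proof Nash equilibrium, which already proves the lemma; the remaining claims in the figure caption (that $(\text{shopping},\text{shopping})$ is the unique envy-proof profile) are not strictly needed for the statement, though I would verify them for completeness via the reduction in Lemma~\ref{lem:envyfree}: envy-proof profiles of $G$ are exactly the Nash equilibria of $G'=(\{u_0-u_1,u_1-u_0\},\{A_0,A_1\})$, whose payoff matrix is antisymmetric with entries $0,0 \mid -2,2 \mid 2,-2 \mid 0,0$, and one checks $(\text{shopping},\text{shopping})$ is its only Nash equilibrium (movie strictly dominates nothing there because the game is zero-sum with a unique equilibrium at $(\text{shopping},\text{shopping})$).

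The only mildly delicate step is ruling out mixed Nash equilibria in $G$, but this is immediate from the strict dominance computation above, so there is no real obstacle. The argument is essentially a finite case check plus one dominance observation; I would present it by symmetry to halve the casework, stating the payoff comparisons inline rather than in display math to keep it compact.

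\begin{proof}
Consider the game $G$ in Figure~\ref{fig:nonexistenvy}. We claim its unique Nash equilibrium is $(\text{movie},\text{movie})$. The game is symmetric, so fix player $b$ and suppose player $\bar b$ plays movie with probability $p$. Player $b$'s expected payoff is $3p+1$ from movie and $3p$ from shopping, so movie strictly dominates shopping for player $b$ regardless of $p$. Hence in any Nash equilibrium both players play movie, giving uniqueness. Now let player $0$ deviate from $(\text{movie},\text{movie})$ to shopping: then $\triangle_{0}u_0(\mathbf{x})=3-4=-1$ while $\triangle_{0}u_1(\mathbf{x})=1-4=-3$, so $\triangle_{0}u_0(\mathbf{x})\le\triangle_{0}u_1(\mathbf{x})$ fails and $(\text{movie},\text{movie})$ is not envy-proof. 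Therefore $G$ admits no envy-proof Nash equilibrium.
\end{proof}
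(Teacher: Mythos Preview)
Your formal proof is correct and follows the same approach as the paper, which uses the identical game from \figureref{nonexistenvy} and simply asserts that $(\text{movie},\text{movie})$ is the unique Nash equilibrium while $(\text{shopping},\text{shopping})$ is the unique envy-proof profile; you go further by actually verifying uniqueness via the strict-dominance observation and by computing the $\triangle$-values explicitly. (As a minor aside, two of the pure-profile case checks in your informal plan are miscomputed---e.g., at $(\text{movie},\text{shopping})$ it is player~$1$, not player~$0$, who profitably deviates, and at $(\text{shopping},\text{shopping})$ the deviation gain is $1-0$, not $3-0$---but since your formal proof bypasses these cases entirely via dominance, they do not affect its validity.)
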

\begin{proof}
	Consider a two-player coordination game as described in \figureref{nonexistenvy}. Note that the game admits only one Nash equilibrium (movie, movie) and only one envy-proof strategy profile (shopping, shopping). Hence there is no envy-proof Nash equilibrium in the game. %\qed
\end{proof}

\begin{corollary}\label{cor:NoImmuneNash}
	There are two-player games which do not admit any immune Nash equilibrium.	
\end{corollary}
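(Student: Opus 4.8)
The plan is to derive this immediately from the two results just established: \theoremref{NashEnvyImmune} (every immune Nash equilibrium is envy-proof) and \lemmaref{NoEnvyFreeNash} (some two-player games admit no envy-proof Nash equilibrium). The logic is a one-line contrapositive argument, so there is essentially no obstacle here; the only thing to be careful about is making sure the quantifiers line up — that the \emph{same} game witnesses both non-existence statements.

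First I would fix the coordination game $G$ of \figureref{nonexistenvy}, which \lemmaref{NoEnvyFreeNash} shows has no envy-proof Nash equilibrium. Next, suppose toward a contradiction that $G$ admits some immune Nash equilibrium $\mathbf{x}$. By \theoremref{NashEnvyImmune}, $\mathbf{x}$ is then envy-proof, so $\mathbf{x}$ is simultaneously a Nash equilibrium and envy-proof, i.e.\ an envy-proof Nash equilibrium of $G$. This contradicts \lemmaref{NoEnvyFreeNash}. Hence $G$ has no immune Nash equilibrium, which proves the corollary.

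If a self-contained verification is wanted instead of invoking \lemmaref{NoEnvyFreeNash} as a black box, I would note that in the game of \figureref{nonexistenvy} the unique Nash equilibrium is $(\text{movie},\text{movie})$, and then check directly that this profile is not immune: if $\P_0$ deviates from movie to shopping (holding $\P_1$ at movie), $\P_1$'s payoff drops from $4$ to $1$, violating the immunity condition $u_{\bar b}(x_b':\mathbf{x_{\bar b}})\geq u_{\bar b}(\mathbf{x})$. Since the only Nash equilibrium fails immunity, no immune Nash equilibrium exists. I expect the main (minor) point requiring attention is simply confirming uniqueness of the Nash equilibrium in \figureref{nonexistenvy}, which is a routine payoff-matrix inspection already done in the proof of \lemmaref{NoEnvyFreeNash}, so the corollary follows with no real additional work.
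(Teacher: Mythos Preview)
Your proposal is correct and matches the paper's own proof, which simply cites \theoremref{NashEnvyImmune} and \lemmaref{NoEnvyFreeNash} without further elaboration. Your contrapositive spelling-out and the optional direct check are fine additions but not required.
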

\begin{proof}
	This is from \theoremref{NashEnvyImmune} and \lemmaref{NoEnvyFreeNash}.
\end{proof}

\section{On Efficient Solvability}\label{sec:solvability}
After the discussion of (non)-existence, we are curious about whether these notions are efficiently solvable. Particularly for finding a Nash equilibrium, we already know the problem is PPAD-complete, but how about finding an immune profile and find an envy-proof profile? 
%%%%%%%%%%
%Furthermore, finding an envy-proof Nash equilibrium must in general be PPAD-hard, but for games with efficiently solvable Nash equilibria, do they have efficiently solvable envy-proof Nash equilibria as well?

\begin{lemma}\label{lem:immuneHard}
	Finding an immune profile in a finite two-player game is PPAD-complete.
\end{lemma}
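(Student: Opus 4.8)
The plan is to establish PPAD-completeness by two reductions. For PPAD-membership, I would exhibit a polynomial-time reduction from the problem of finding an immune profile to the problem of finding a Nash equilibrium of a related two-player game, which is known to be in PPAD. This direction is essentially already contained in the proof of \lemmaref{immune}: given $G=(\{u_0,u_1\},\{A_0,A_1\})$, form $G'=(\{-u_1,-u_0\},\{A_0,A_1\})$, and observe that a Nash equilibrium $\mathbf{y}$ of $G'$ is exactly an immune profile of $G$. Since $G'$ is constructible in polynomial time from $G$ and 2-NASH is in PPAD, finding an immune profile is in PPAD.

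For PPAD-hardness I would go the other direction: reduce the problem of finding a Nash equilibrium of an arbitrary two-player game $G=(\{u_0,u_1\},\{A_0,A_1\})$ to that of finding an immune profile of some game $H$ built from $G$. The natural first attempt mirrors the membership reduction: set $H = (\{-u_1,-u_0\},\{A_0,A_1\})$, so that immune profiles of $H$ are Nash equilibria of $G$. One must check that this map composes in both directions — i.e., that the ``negate-and-swap'' operation is an involution on two-player games (it is: applying it twice to $(\{u_0,u_1\})$ gives $(\{-(-u_0),-(-u_1)\}) = (\{u_0,u_1\})$ after the double swap), so hardness transfers cleanly from 2-NASH. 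I would state the reduction, verify the correspondence of solution sets, note that it is parsimonious and polynomial-time, and conclude PPAD-hardness.

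The step I expect to require the most care is making sure the hardness reduction starts from a version of 2-NASH that is already known to be PPAD-hard for the \emph{exact} (rather than approximate) search problem, and confirming that nothing about the transformation $G \mapsto H$ changes the game's size or representation in a way that breaks the reduction — e.g., that negating utilities keeps entries polynomially bounded and that the action sets are untouched. A secondary subtlety is the usual one for total search problems: one should make clear that ``finding an immune profile'' is well-defined as a PPAD problem, i.e., that a solution is guaranteed to exist (which is exactly \lemmaref{immune}) so that the problem is total, and that the reduction maps solutions to solutions and non-instances appropriately. Combining membership and hardness gives PPAD-completeness.
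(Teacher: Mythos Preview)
Your proposal is correct and follows essentially the same approach as the paper: the paper's proof sketch uses exactly the negate-and-swap transformation $G \mapsto G' = (\{-u_1,-u_0\},\{A_0,A_1\})$ to reduce 2-NASH to the immune-profile problem for hardness, with membership implicit via \lemmaref{immune}. Your write-up is more explicit about both directions, the involution property, and the totality/representation subtleties, but the core argument is identical.
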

\begin{proofsketch}
	This is similar to the proof of \lemmaref{immune} but argued in an inverse way.
	Consider any 2-player game $G=(\{u_0,u_1\},\{A_0,A_1\})$. Define a game $G'=(\{-u_1,-u_0\},\{A_0,A_1\})$. Then the problem of finding a Nash equilibrium of $G$ is reduced to finding an immune profile of $G'$. Since finding Nash is PPAD-complete, it is PPAD-complete, too.
	%(Maybe simply say the reason is that the two problems are mutual-reducible.)
	%\qed
\end{proofsketch}

\begin{lemma}\label{lem:envyfreeEasy}
	Finding an envy-proof profile in a finite two-player game is polynomial-time solvable.
\end{lemma}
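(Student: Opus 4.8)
The plan is to recycle, verbatim, the correspondence established in \lemmaref{envyfree}: a profile $\mathbf{x}$ is envy-proof for $G=(\{u_0,u_1\},\{A_0,A_1\})$ if and only if it is a Nash equilibrium of the auxiliary game $G'=(\{u_0-u_1,\,u_1-u_0\},\{A_0,A_1\})$, and this bijection is the identity map on profiles. Building $G'$ from $G$ costs only an entrywise subtraction of the two payoff matrices, which is clearly polynomial time, and any Nash equilibrium of $G'$ can be output unchanged as an envy-proof profile of $G$. So it suffices to argue that a Nash equilibrium of $G'$ is computable in polynomial time.

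The one observation that makes this work is that $G'$ is a \emph{zero-sum} game: its two payoff functions $u_0-u_1$ and $u_1-u_0$ sum to zero at every action pair. Writing $v := u_0-u_1$, player $\P_0$ is the maximizer and $\P_1$ the minimizer of $v$, so by von Neumann's minimax theorem the Nash equilibria of $G'$ are exactly the pairs consisting of an optimal mixed strategy for each side, and these are obtained by solving the dual pair of linear programs
\[
\max_{\mathbf{x_0}}\ \min_{a_1\in A_1}\ v(\mathbf{x_0},a_1),
\qquad
\min_{\mathbf{x_1}}\ \max_{a_0\in A_0}\ v(a_0,\mathbf{x_1}),
\]
each over mixed strategies of the respective player. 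These LPs have $O(|A_0|+|A_1|)$ variables and constraints with coefficients drawn from the (rational) entries of $u_0$ and $u_1$, hence polynomial bit-size, and are solvable in polynomial time by any polynomial-time LP algorithm. Pairing an optimal solution of the first LP with an optimal solution of the second yields a Nash equilibrium of $G'$ and therefore an envy-proof profile of $G$.

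There is essentially no hard step; the content is entirely in noticing the zero-sum structure. It is worth contrasting this with \lemmaref{immuneHard}: the reduction for immunity passes through $G''=(\{-u_1,-u_0\},\{A_0,A_1\})$, which carries no such structure and leaves the problem PPAD-complete, whereas the envy-proof reduction lands precisely on a zero-sum game and the complexity collapses to $\mathrm{P}$. The only routine detail to spell out, for completeness, is that the value/optimal-strategy LPs retain polynomial bit-complexity when payoffs are rationals, so that the composition of the reduction with the LP solve is polynomial overall; this is standard.
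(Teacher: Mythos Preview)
Your proposal is correct and follows the same approach as the paper: reduce to finding a Nash equilibrium of the auxiliary game $G'=(\{u_0-u_1,u_1-u_0\},\{A_0,A_1\})$ via \lemmaref{envyfree}, observe that $G'$ is zero-sum, and invoke the polynomial-time solvability of two-player zero-sum games through linear programming. You have simply spelled out the LP details that the paper's proof sketch leaves implicit.
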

\begin{proofsketch}
	In light of the proof of \lemmaref{envyfree}, finding an envy-proof profile can reduce to finding a Nash equilibrium of a zero-sum game. Since zero-sum two-player game is polynomial-time solvable, so is finding an envy-proof profile.
\end{proofsketch}

Recall that from \theoremref{NashEnvyImmune},  envy-proofness is a potentially weaker notion than immunity through the reduction sense. Here by \lemmaref{immuneHard} and \lemmaref{envyfreeEasy}, it is clear to see this point.

\section{Envy-proof Nash and Immune Nash}\label{sec:EnvyImmune}

Now we learned that finding a Nash profile and finding an immune profile are PPAD-hard; meanwhile find an envy-proof profile is efficiently solvable. However, how about immune Nash equilibria and envy-proof Nash equilibria? We have also illustrated that neither an immune Nash equilibrium nor an envy-proof Nash equilibrium does always exist in a game. Hence a follow-up question is whether we can determine their existence in a game efficiently, and whether we can find a solution efficiently if there does exist one.

%\paragraph{(If the 3 special games said before this)} First observing the special cases from Section : Every team game admits an envy-proof profile, and finding a Nash equilibrium in a team game is easy. This can te one to lean to guess finding a envy-proof Nash might be

%Next we discuss two decision problems.   \lemmaref{NoEnvyFreeNash} and \corollaryref{NoImmuneNash} state that both an envy-proof Nash equilibrium and an immune Nash equilibrium are not necessary characters of a general two-player game. 

%However determining whether a game has these properties is another story. From

Gilboa and Zemel \cite{GilboaZe89} and Conitze and Sandholm \cite{ConitzeSa08} show that 
it can be NP-complete to determine the existence of a Nash equilibrium with certain simple constraints.
%, such as the range of one player's utility or that of the total utility of the two players, the size of the support, the components of the support, being Pareto-optimal or strong, existence of more than one Nash equilibrium. 
Their results however cannot imply the problem here, nor does their construction in the proof work out. Nevertheless, following the line of reduction from SATISFIABILITY, we show that determining the existence of an envy-proof Nash equilibrium is NP-complete.

\begin{theorem}\label{thm:envyfreeNashHard}
	Even in symmetric 2-player games, it is NP-complete to determine whether there is an envy-proof Nash equilibrium.
\end{theorem}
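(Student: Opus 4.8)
The plan is to reduce from a suitable NP-complete satisfiability problem, following the general template of Gilboa–Zemel and Conitzer–Sandholm but with a payoff gadget tailored to the envy-proofness constraint rather than to Nash support size. First I would encode a 3-SAT (or 1-in-3-SAT) instance $\varphi$ over variables $z_1,\dots,z_n$ and clauses $C_1,\dots,C_k$ into a symmetric two-player game $G$ whose action set $A$ contains one action for each literal $z_i,\lnot z_i$, one action for each clause $C_j$, and a small number of auxiliary ``control'' actions. The key design principle is to make the payoffs so that (i) any envy-proof Nash equilibrium is forced to be supported on a set of literals that picks exactly one of $\{z_i,\lnot z_i\}$ for every $i$ — this consistency is enforced by a matching-pennies-style subgame between $z_i$ and $\lnot z_i$ that makes it advantageous (for Nash) to put positive weight on complementary literals, while the envy constraint $\triangle_b u_b \le \triangle_b u_{\bar b}$ kills any profile that would try to spread weight onto both literals of some variable or onto a clause action; and (ii) the clause actions $C_j$ serve as profitable deviations for a player unless the opponent's literal distribution already ``satisfies'' $C_j$, i.e. places positive weight on some literal appearing in $C_j$.

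Concretely, I would arrange the utilities so that a literal assignment that satisfies $\varphi$ yields a strategy profile $\mathbf{x}$ (both players play the uniform-ish distribution over the chosen literals) which is simultaneously (a) a Nash equilibrium — no deviation to an unchosen literal, another chosen literal, a clause action, or a control action strictly helps the deviator's own payoff; and (b) envy-proof — any deviation that does raise the deviator's payoff by $\delta$ raises the opponent's by at least $\delta$, which I can guarantee by making the ``off-equilibrium'' rows of the payoff matrix symmetric in a way that equalizes $\triangle_b u_b$ and $\triangle_b u_{\bar b}$, or makes $\triangle_b u_b$ negative outright. Conversely, I want to show that if $\varphi$ is unsatisfiable then no profile is both Nash and envy-proof: the Nash condition together with the clause gadget forces the support to induce a consistent literal assignment (this is where the matching-pennies consistency gadget does its work, exactly as in the classical reductions), that assignment must falsify some clause $C_j$, and then deviating to the action $C_j$ gives the deviator a strict gain while \emph{decreasing} the opponent's payoff, violating envy-proofness — so the profile fails. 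Membership in NP is immediate since an envy-proof Nash equilibrium, like a Nash equilibrium, has a polynomial-size description with rational entries (its support can be guessed and the defining (in)equalities checked in polynomial time via linear programming, the envy constraints being linear in the mixed strategy once the support is fixed).

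The main obstacle I expect is getting \emph{both} constraints to bite on the \emph{same} profiles: the Nash inequalities push toward certain supports, the envy inequalities $\triangle_b u_b(\mathbf{x}) \le \triangle_b u_{\bar b}(\mathbf{x})$ are a different and somewhat orthogonal family, and a careless gadget could easily admit a ``degenerate'' profile — for instance one concentrated entirely on control actions, or an all-equal profile where every deviation has $\triangle_b u_b < 0$ so envy-proofness holds vacuously — that is an envy-proof Nash equilibrium regardless of whether $\varphi$ is satisfiable. Ruling these out is the crux: I would need the control actions to be strictly dominated by literal actions against any opponent strategy that a Nash equilibrium could use (so they carry no weight in equilibrium), and I would need to ensure that the only way to have $\triangle_b u_b(\mathbf{x}) \le \triangle_b u_{\bar b}(\mathbf{x})$ simultaneously with the Nash property is to be ``blocking'' every clause action — which is precisely the satisfiability condition. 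Symmetry of the game (required by the theorem statement) is a mild extra constraint that I would handle by using the same payoff matrix $U$ for $\P_0$ as $U^T$ for $\P_1$ and arguing that symmetric games still have the equilibrium/gadget structure we need; this is standard but must be checked so that the consistency subgames and clause gadgets remain symmetric. Once the gadget is pinned down, verifying the two directions is a routine (if tedious) case analysis over possible deviations.
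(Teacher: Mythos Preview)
Your overall template matches the paper's: a SAT reduction with actions for literals and clauses plus auxiliary actions, arranged so that satisfying assignments yield envy-proof Nash equilibria. NP membership is fine as you sketch it.

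But your converse argument has a genuine logical gap. You write that if $\varphi$ is unsatisfiable, any Nash equilibrium $\mathbf{x}$ is supported on a consistent literal assignment falsifying some clause $C_j$, and then ``deviating to the action $C_j$ gives the deviator a strict gain while decreasing the opponent's payoff, violating envy-proofness.'' A deviation that gives the deviator a strict gain means $\mathbf{x}$ was not a Nash equilibrium in the first place, contradicting your hypothesis. The clause gadget cannot do double duty: either the unsatisfied-clause deviation breaks Nash (in which case those literal profiles are simply not equilibria, and you still have to account for whatever equilibrium \emph{does} exist---one always does, by Nash's theorem), or it preserves Nash, in which case you need a separate reason for envy-proofness to fail.

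The paper resolves this with an idea you have not hit on, and in fact with the opposite of what you propose. Rather than making the auxiliary actions dominated, it introduces a \emph{default} action $f$ such that $(f,f)$ is always a Nash equilibrium---indeed the \emph{unique} Nash equilibrium whenever $\phi$ is unsatisfiable (variable and clause actions are used only to rule out all other candidate equilibria from being Nash). The payoffs around $f$ are then deliberately set so that $(f,f)$ is \emph{not} envy-proof: a unilateral deviation from $f$ to a variable or clause action costs the deviator $1$ but costs the opponent $2$, so $\triangle_b u_b = -1 > -2 = \triangle_b u_{\bar b}$. Thus the game has an envy-proof Nash equilibrium if and only if it has a Nash equilibrium other than $(f,f)$, which holds if and only if $\phi$ is satisfiable. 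Your instinct that ruling out degenerate profiles is the crux is exactly right; the fix, however, is not to eliminate the auxiliary action from all equilibria but to \emph{force} a single controlled degenerate equilibrium and engineer it to fail the envy test.
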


We defer the proof to \sectionref{envyfreeNashHard}. The idea is to construct a game such that every legal assignment of a Boolean formula corresponds to an envy-proof profile in the game, and every Nash equilibrium, except a default Nash equilibrium, corresponds to a satisfiable assignment of the formula. Meanwhile, the default, always-existing Nash equilibrium serves as an absorbing state in the game but is not envy-proof.

%\begin{proofsketch}
%	This can be implied by Theorem 1 in \cite{ConitzeSa08}, where the problem is reduced from SATISFIABILITY.
%	From the proof of the theorem, it is not hard to observe that the "default" Nash equilibrium is not envy-proof, and if there are two or more Nash equilibria, there is a Nash equilibrium of playing the "literals". Furthermore, any Nash equilibrium of playing the "literals" are envy free (from the same reason of that of a team game). This implies that an envy-proof Nash equilibrium exists if and only if there is a satisfiable assignment in the reduction.
%\end{proofsketch}

On the other hand, somehow surprisingly, while finding an immune profile is harder than finding an envy-proof profile (by \lemmaref{immuneHard} and \lemmaref{envyfreeEasy}), the following theorem suggests that determining the existence of an immune Nash equilibrium is easier than determining the existence of an envy-proof Nash equilibrium, and immune Nash equilibria can be efficiently located given they exist in a game.

\begin{theorem}\label{thm:immuneNashEasy}
	Let $G=(\{u_0,u_1\},\{A_0,A_1\})$ be a normal form 2-player game. It is polynomial-time solvable to determine whether there exists an immune Nash equilibrium in $G$ and to output a solution if there exists.
\end{theorem}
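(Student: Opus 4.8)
The plan is to reduce the existence question to solving polynomially many linear feasibility programs, one for each ordered pair of pure actions $(a^*,b^*)\in A_0\times A_1$, which play the role of guessed ``support representatives.'' The starting point is a sharpening of the definition of an immune Nash equilibrium. Suppose $\mathbf{x}=(\mathbf{x_0},\mathbf{x_1})$ is one, and set $v_0=u_0(\mathbf{x})$, $v_1=u_1(\mathbf{x})$. The Nash condition for $\P_0$ gives $u_0(a,\mathbf{x_1})\le v_0$ for all $a\in A_0$, with equality on $\mathrm{supp}(\mathbf{x_0})$. The immunity condition applied to $\P_0$'s deviations gives $u_1(a,\mathbf{x_1})\ge v_1$ for all $a$; but $v_1=\sum_{a\in\mathrm{supp}(\mathbf{x_0})}\mathbf{x_0}(a)\,u_1(a,\mathbf{x_1})$ is an average of numbers each $\ge v_1$, hence $u_1(a,\mathbf{x_1})=v_1$ for every $a\in\mathrm{supp}(\mathbf{x_0})$. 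Symmetrically, $u_1(\mathbf{x_0},b)\le v_1$ for all $b$ with equality on $\mathrm{supp}(\mathbf{x_1})$, and $u_0(\mathbf{x_0},b)\ge v_0$ for all $b$ with equality on $\mathrm{supp}(\mathbf{x_1})$. Consequently, for any $a^*\in\mathrm{supp}(\mathbf{x_0})$ and $b^*\in\mathrm{supp}(\mathbf{x_1})$ we get $v_0=u_0(a^*,\mathbf{x_1})=u_0(\mathbf{x_0},b^*)$ and $v_1=u_1(a^*,\mathbf{x_1})=u_1(\mathbf{x_0},b^*)$: both equilibrium values are now expressed through quantities linear in $\mathbf{x_1}$ alone or linear in $\mathbf{x_0}$ alone, which is exactly what removes the bilinearity.

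Guided by this, for each pair $(a^*,b^*)$ I would set up the linear program $\mathrm{LP}(a^*,b^*)$ in the unknown mixed strategies $\mathbf{x_0}$ over $A_0$ and $\mathbf{x_1}$ over $A_1$, with constraints: (i) $u_0(a,\mathbf{x_1})\le u_0(a^*,\mathbf{x_1})$ for all $a\in A_0$; (ii) $u_1(a,\mathbf{x_1})\ge u_1(a^*,\mathbf{x_1})$ for all $a\in A_0$; (iii) $u_1(\mathbf{x_0},b)\le u_1(\mathbf{x_0},b^*)$ for all $b\in A_1$; (iv) $u_0(\mathbf{x_0},b)\ge u_0(\mathbf{x_0},b^*)$ for all $b\in A_1$; (v) $u_0(a^*,\mathbf{x_1})=u_0(\mathbf{x_0},b^*)$; and (vi) $u_1(a^*,\mathbf{x_1})=u_1(\mathbf{x_0},b^*)$. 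Here (i)--(iv) are linear because each involves one player's payoff against the other player's mixed strategy, and (v)--(vi) are linear because each equates something linear in $\mathbf{x_1}$ with something linear in $\mathbf{x_0}$; crucially, no bilinear $u_b(\mathbf{x})$ term appears. The algorithm solves $\mathrm{LP}(a^*,b^*)$ for all $|A_0|\cdot|A_1|$ pairs, returns a feasible $(\mathbf{x_0},\mathbf{x_1})$ if one exists, and otherwise reports that there is no immune Nash equilibrium.

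For correctness, completeness is immediate from the sharpening: an immune Nash equilibrium $\mathbf{x}$, together with $a^*\in\mathrm{supp}(\mathbf{x_0})$ and $b^*\in\mathrm{supp}(\mathbf{x_1})$, satisfies (i)--(vi). The step I expect to be the main obstacle is soundness --- showing that \emph{every} feasible point of $\mathrm{LP}(a^*,b^*)$ is an immune Nash equilibrium, despite our never imposing a support or complementarity condition. The argument is a squeeze: from (i), $u_0(\mathbf{x})=\sum_a\mathbf{x_0}(a)u_0(a,\mathbf{x_1})\le u_0(a^*,\mathbf{x_1})$; from (iv), $u_0(\mathbf{x})=\sum_b\mathbf{x_1}(b)u_0(\mathbf{x_0},b)\ge u_0(\mathbf{x_0},b^*)$; and (v) forces these two bounds to coincide, so $u_0(\mathbf{x})=u_0(a^*,\mathbf{x_1})=u_0(\mathbf{x_0},b^*)=:v_0$. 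Then (i) reads $u_0(a,\mathbf{x_1})\le v_0=u_0(\mathbf{x})$, i.e.\ the Nash condition for $\P_0$, and (iv) reads $u_0(\mathbf{x_0},b)\ge v_0=u_0(\mathbf{x})$, i.e.\ the immunity condition against $\P_1$'s deviations. The mirror-image squeeze using (ii), (iii), (vi) yields $u_1(\mathbf{x})=u_1(a^*,\mathbf{x_1})=u_1(\mathbf{x_0},b^*)=:v_1$, after which (iii) is the Nash condition for $\P_1$ and (ii) is the immunity condition against $\P_0$'s deviations. Thus any feasible point is an immune Nash equilibrium. Since linear feasibility over a polytope with rational data is polynomial-time solvable and returns a rational point, the theorem follows.
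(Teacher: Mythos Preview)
Your proof is correct, but it takes a different route from the paper's. The paper observes that $\mathbf{x}$ is an immune Nash equilibrium of $G$ if and only if $\mathbf{x}$ is simultaneously a Nash equilibrium of the two \emph{zero-sum} games $G_0=(\{u_0,-u_0\},A_0,A_1)$ and $G_1=(\{u_1,-u_1\},A_0,A_1)$; it then solves each $G_b$ by a single LP to obtain the value $v_b$, and checks via one more feasibility LP whether the two optimal-strategy polytopes intersect. Your approach instead guesses a support representative $(a^*,b^*)$ and solves $|A_0|\cdot|A_1|$ feasibility LPs, recovering the equilibrium values through the squeeze argument rather than through the minimax theorem.

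Both arguments are ultimately driven by the same structural fact---that the immune-Nash conditions decouple into two independent saddle-point conditions, one for each $u_b$---but the paper makes this explicit and thereby needs only a constant number of LPs, whereas your enumeration over $(a^*,b^*)$ is in fact unnecessary: if you had introduced $v_0,v_1$ as free scalar variables in place of $u_0(a^*,\mathbf{x_1})$ and $u_1(\mathbf{x_0},b^*)$, your constraints (i)--(iv) (with $v_0,v_1$ on the right) already form a single linear feasibility program whose solutions are exactly the immune Nash equilibria, by the very same squeeze you wrote down. So your argument is sound, but it can be tightened to one LP, and the paper's zero-sum framing explains \emph{why} the bilinearity disappears.
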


\begin{proof}
	By definition, a strategy profile $\mathbf{x}$ is an immune Nash equilibrium if and only if
	
	\begin{equation}\label{eq:ImmuneNash}
		\forall b, \forall x_b'\in A_b,x'_{\bar{b}}\in A_{\bar{b}},
		u_b(\mathbf{x_b}:x_{\bar{b}}')\geq u_b(\mathbf{x})\geq u_b(x_b',\mathbf{x_{\bar{b}}}).
	\end{equation}
	
	Now define two games $G_0, G_1$: For all $b, G_b=(\{u_b,-u_b\},A_b)$. Note $\mathbf{x}$ is a Nash equilibrium of $G_0$ and $G_1$ if and only if $\mathbf{x}$ satisfies \eqref{eq:ImmuneNash}. Hence $\mathbf{x}$ is an immune Nash equilibrium of $G$ if and only if $\mathbf{x}$ is a Nash equilibrium of $G_0$ and $G_1$ at the same time. Because both $G_0$ and $G_1$ are zero-sum games, 
	%from the minimax theorem (by von Neumann, 1928), 
	finding their Nash equilibria equals solving linear programming $LP_0$ and $LP_1$: For  $b=0$ and $1$, let $M_b$ be the matrix form of $u_b$. Then,
	
	\begin{align}
	 LP_b:  & \max z_b\\
	 \notag s.t. \ \ & \mathbf{x}^T M_b\geq z_b \mathbf{1}^T  \\
	 &\notag \mathbf{x}^T\mathbf{1} = 1 \\
	 &\notag \mathbf{x} \geq \mathbf{0}
	\end{align}
	
	Clearly $G$ admits an immune Nash equilibrium if and only if $LP_0$ and $LP_1$ share a common optimum solution. To check whether they do, we solve $LP_0$ and $LP_1$ first. Let $v_b=\max z_b, \forall b$. We claim that the common optimum solution space of $LP_0$ and $LP_1$ is nonempty if and only if the following inequations are feasible:
	
	\begin{equation}\label{eq:immuneNashLP}
	\forall b, \mathbf{x}^T M_b\geq v_b \mathbf{1}^T
	\end{equation}
	
	Indeed, if $\mathbf{x}$ is a common optimum solution of $LP_0$ and $LP_1$, then certainly it fulfills \eqref{eq:immuneNashLP}. On the other hand, if $\mathbf{x}$ fulfills \eqref{eq:immuneNashLP}, it implies a feasible solution to $LP_b$ with value at least $v_b$ for each $b$. As $v_b$ is the optimum value of $LP_b$, $\mathbf{x}$ is then an optimum solution to $LP_b$ for each $b$.
	Since linear programming is polynomial-time solvable, so is the problem.
	
%\qed
\end{proof}

\subsection{Proof of \theoremref{envyfreeNashHard}}\label{sec:envyfreeNashHard}

Let $\phi$ be a given Boolean formula in conjunctive normal form. Let $V=\{x^1,...,x^n\}$ be the set of variables, $L=\{+x^1,-x^1,...,+x^n,-x^n\}$ the set of corresponding literals, and $C$ the set of clauses in $\phi$. Define a function $\LitSet: V\cup C \rightarrow 2^{L}$ that outputs the set of literals corresponding to a variable or a clause. That is, $\LitSet(x)=\{+x,-x\}$ if $x\in V$ and $\LitSet(x)=\{\ell| x \text{ contains } \ell\}$ if $x\in C$. We construct a symmetric 2-player game $G(\phi)$ in normal form as follows. Let $f$ be a default action, $\Sigma=V\cup L\cup C\cup \{f\}$, $\beta=-n$ (or any sufficiently small value), and the utility function be

\begin{itemize}
	%\item $u_b(\ell_b,\ell_{\bar{b}})=n-1+\frac{\alpha}{2}$ for all $\ell_b,\ell_{\bar{b}}\in L$ with $\ell_b\neq \ell_{\bar{b}}$.
	%\item $u_b(\ell,-\ell)=c$ for all $\ell\in L$.
	\item $u_0(\ell_0,\ell_1)=u_1(\ell_0,\ell_1)= \left\{\begin{matrix}
	n-1,	& \ell_0\neq -\ell_1 \\ 
	\beta,	& \ell_0= -\ell_1
	\end{matrix}\right. , \forall \ell_0,\ell_1\in L$
	\item $u_0(x_0,x_1)=u_1(x_0,x_1)=\beta, \forall x_0,x_1\in V\cup C$.
	\item $u_0(x,\ell)=u_1(\ell,x)= \left\{\begin{matrix}
	n, & \ell\not\in \LitSet(x) \\ 
	0, & \ell \in \LitSet(x)
	\end{matrix}\right. ,\forall x\in V\cup C$
	\item $u_0(\ell,x)=u_1(x,\ell)= \left\{\begin{matrix}
	n-2, & \ell\not\in \LitSet(x) \\ 
	2(n-1), & \ell \in \LitSet(x)
	\end{matrix}\right. ,\forall x\in V\cup C$
	\item $u_0(f,x)=u_1(x,f)=
	\left\{\begin{matrix}
	n+2, & x=f\\ 
	n, & x\in V\cup C\\ 
	n-1, & x\in L
	\end{matrix}\right.$
	\item $u_0(x,f)=u_1(f,x)=
	\left\{\begin{matrix}
	n+1, & x\in V\cup C\\ 
	n-1, & x\in L\\
	\end{matrix}\right.$	
\end{itemize}

Take $\phi=(x^1\vee -x^2)\wedge (-x^1\vee x^2)$ for example. The game $G(\phi)$ is showed in \tableref{sat}. Note that $\phi$ has exactly two solutions, either assigning both variables to true or both to false 
%$(x^1,x^2)=(true,true)$ and $(x^1,x^2)=(false,false)$
, and the game has exactly 3 equilibria: 1) both player play $\{+x^1,+x^2\}$ uniformly randomly ; 2) both player play $\{-x^1,-x^2\}$ uniformly randomly; 3) both players play $f$. The first two are envy-proof while the last one is not. %In fact, uniformly randomizing any literal set that can be legally assigned would

\begin{table}[]
	\centering
	\caption{The table shows the game $G(\phi)$, where $\phi=(x^1\vee -x^2)\wedge (-x^1\vee x^2)$. By $\times$ we denote $(\beta,\beta)$ since $\beta$ is just a sufficiently small value for both the players to avoid running into the state.}
	\label{tab:sat}
	\begin{tabular}{|l|l|l|l|l|l|l|l|l|l|}
		\hline
		& $x^1$    & $x^2$    & $+x^1$   & $-x^1$   & $+x^2$   & $-x^2$   & $(x^1\vee -x^2)$ & $(-x^1\vee x^2)$ & $f$ \\ \hline
		$x^1$            & $\times$ & $\times$ & 0,2      & 0,2      & 2,0      & 2,0      & $\times$         & $\times$         & 3,2 \\ \hline
		$x^2$            & $\times$ & $\times$ & 2,0      & 2,0      & 0,2      & 0,2      & $\times$         & $\times$         & 3,2 \\ \hline
		$+x^1$           & 2,0      & 0,2      & 1,1      & $\times$ & 1,1      & 1,1      & 2,0              & 0,2              & 1,1 \\ \hline
		$-x^1$           & 2,0      & 0,2      & $\times$ & 1,1      & 1,1      & 1,1      & 0,2              & 2,0              & 1,1 \\ \hline
		$+x^2$           & 0,2      & 2,0      & 1,1      & 1,1      & 1,1      & $\times$ & 0,2              & 2,0              & 1,1 \\ \hline
		$-x^2$           & 0,2      & 2,0      & 1,1      & 1,1      & $\times$ & 1,1      & 2,0              & 0,2              & 1,1 \\ \hline
		$(x^1\vee -x^2)$ & $\times$ & $\times$ & 0,2      & 2,0      & 2,0      & 0,2      & $\times$         & $\times$         & 3,2 \\ \hline
		$(-x^1\vee x^2)$ & $\times$ & $\times$ & 2,0      & 0,2      & 0,2      & 2,0      & $\times$         & $\times$         & 3,2 \\ \hline
		$f$              & 2,3      & 2,3      & 1,1      & 1,1      & 1,1      & 1,1      & 2,3              & 2,3              & 4,4 \\ \hline
	\end{tabular}
\end{table}

We will proof the following claim, which implies \theoremref{envyfreeNashHard}.

\begin{claim} %We claim that 
If $(\ell^1,...,\ell^n)$ satisfies $\phi$, then there is an envy-proof Nash equilibrium of $G(\phi)$, where both players play $\{\ell^1,...,\ell^n\}$ uniformly randomly. The only other Nash equilibrium is for both players to play $f$ and is not envy-proof.
\end{claim}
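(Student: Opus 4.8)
The plan splits into two halves: exhibit the claimed equilibria (and decide their envy-proofness), then show nothing else is an equilibrium. Throughout write $\mathbf{p},\mathbf{q}$ for the two players' mixed strategies, $\mathbf{p}(T)=\sum_{a\in T}p_a$ for the mass $\mathbf{p}$ puts on an action set $T$, $U(S)$ for the uniform distribution on a set $S$ of actions, and $v_0,v_1$ for the two players' payoffs in the profile under discussion.

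\emph{Exhibiting the equilibria.} Given a satisfying assignment $(\ell^1,\dots,\ell^n)$, put $S=\{\ell^1,\dots,\ell^n\}$ and let $\mathbf{x}^S=(U(S),U(S))$. Since $S$ has one literal per variable, no two elements of $S$ negate each other, so under $\mathbf{x}^S$ both players get exactly $n-1$. I would check, class by class, that no deviation by $\P_0$ beats $n-1$: deviating to a literal of $S$, to $f$, or to a variable each gives exactly $n-1$; deviating to a literal outside $S$ gives $\frac{\beta+(n-1)^2}{n}<n-1$; and deviating to a clause $c$ gives $n-|S\cap\LitSet(c)|$, which is $\le n-1$ \emph{exactly} because $S$ satisfies $c$. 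Hence $\mathbf{x}^S$ is a Nash equilibrium. For envy-proofness, symmetry of the game and of $\mathbf{x}^S$ reduces the requirement to $u_0(a:\mathbf{x}^S_1)\le u_1(a:\mathbf{x}^S_1)$ for each deviation $a$ of $\P_0$; the two utilities are equal when $a$ is a literal, $f$, or a variable, and for a clause $c$ the inequality reduces after clearing denominators to $|S\cap\LitSet(c)|\ge 1$, again by satisfiability of $\phi$. Finally $(f,f)$ is a Nash equilibrium because $f$ is the strict best reply to $f$, and it fails envy-proofness: a deviation of $\P_0$ to any variable gives $\triangle_0 u_0=(n+1)-(n+2)=-1>-2=n-(n+2)=\triangle_0 u_1$.

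\emph{No mass on $V\cup C$.} Let $(\mathbf{p},\mathbf{q})$ be any Nash equilibrium. Because $u_0(f,\cdot)\ge n-1$ against every opposing action, $v_0\ge u_0(f,\mathbf{q})=(n-1)+\mathbf{q}(V\cup C)+3q_f$, and symmetrically $v_1\ge(n-1)+\mathbf{p}(V\cup C)+3p_f$. The key device is the symmetrized-payoff identity
\[
v_0+v_1 \;=\; E_{a\sim\mathbf{p},\,b\sim\mathbf{q}}\bigl[\,u_0(a,b)+u_0(b,a)\,\bigr],
\]
together with the fact — immediate from the construction — that $w(a,b):=u_0(a,b)+u_0(b,a)$ equals $2(n-1)$ for \emph{every} pair of actions except: $w=2\beta<2(n-1)$ when $a,b$ are mutually negating literals or both lie in $V\cup C$; $w=2n+1$ for an $f$-versus-$(V\cup C)$ pair; and $w=2n+4$ for $(f,f)$. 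Plugging the two lower bounds and this near-constancy of $w$ into the identity, the total $f$-pair surplus is at most $3p_f+3q_f$, which forces the ``bad-pair'' deficit to vanish: in \emph{every} Nash equilibrium, $\mathbf{p}(V\cup C)=\mathbf{q}(V\cup C)=0$ and $\sum_\ell p_\ell q_{-\ell}=0$.

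\emph{Finishing the classification.} Both players are now supported on $L\cup\{f\}$. If $p_f>0$, then $v_0=u_0(f,\mathbf{q})=(n-1)+3q_f$, but every literal in $\mathrm{supp}(\mathbf{p})$ pays at most $n-1$ against $\mathbf{q}$ (as $\mathbf{q}(V\cup C)=0$), so $q_f=0$; repeating the step from $\P_1$'s side then forces $v_1=n-1$, contradicting $v_1\ge(n-1)+3p_f>n-1$. Hence $p_f>0\Rightarrow\mathrm{supp}(\mathbf{p})=\{f\}$, whose unique best reply is $f$, i.e.\ the profile is $(f,f)$. Otherwise $p_f=q_f=0$ and both players randomize over literals only, so $v_0=v_1=n-1$; the relation $\sum_\ell p_\ell q_{-\ell}=0$ makes the supports sign-compatible; the no-deviation-to-$x^i$ conditions give $\mathbf{q}(\{+x^i,-x^i\})\ge 1/n$ for each of the $n$ variables, hence equality (and likewise for $\mathbf{p}$); sign-compatibility then forces $\mathbf{p}=\mathbf{q}=U(S)$ for one assignment $S$; and the no-deviation-to-$c$ conditions read $\mathbf{q}(\LitSet(c))\ge 1/n$, i.e.\ $|S\cap\LitSet(c)|\ge 1$ for every clause, i.e.\ $S\models\phi$. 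So the equilibria are exactly $(f,f)$ together with the $(U(S),U(S))$ for which $S$ satisfies $\phi$, which is the claim. The one place that needs care is the symmetrized-payoff step: the whole argument hinges on the design fact that $w$ is constant off a short list of pairs and on combining it with the two elementary lower bounds so that all slack terms are driven to zero at once; everything afterwards is routine bookkeeping.
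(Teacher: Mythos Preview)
Your proof is correct and reaches the same classification as the paper, but the mechanics of the exclusion step are genuinely different. The paper argues by cases: it first bounds the social utility by $2(n-1)$ whenever one player is supported on $V\cup C$ and the other on $\Sigma\setminus\{f\}$, then observes that whichever player falls below their $f$-deviation payoff has a profitable switch; it extends this to general mixtures involving $V\cup C$ by a somewhat informal ``the same analysis holds'' remark, and only afterwards handles the negating-literal overlap and the $f$-versus-$L$ mixing as separate cases. Your route is cleaner: you combine the two best-response lower bounds $v_b\ge u_b(f,\cdot)$ with the symmetrized-payoff identity $v_0+v_1=E[w(a,b)]$ and the near-constancy of $w$, which simultaneously forces $\mathbf p(V\cup C)=\mathbf q(V\cup C)=0$, $\sum_\ell p_\ell q_{-\ell}=0$, and (implicitly, via the equality case of ``surplus $\le 3p_f+3q_f$'') $p_f\mathbf q(L)=q_f\mathbf p(L)=0$. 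This buys you rigor where the paper is sketchy, and it collapses three separate exclusion arguments into one inequality. One small wording point: in your $p_f>0$ case, the sentence ``every literal in $\mathrm{supp}(\mathbf p)$ pays at most $n-1$ \dots\ so $q_f=0$'' is vacuous when $\mathrm{supp}(\mathbf p)=\{f\}$; you clearly intend a case split there (either $\mathrm{supp}(\mathbf p)=\{f\}$ and we are done, or a literal is present and the contradiction goes through), and it would read better if made explicit.
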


\begin{proof}
	
Let $\mathbf{\mu}$ be the uniform distribution over $\{\ell^1,...,\ell^n\}$. 
First we show that the profile $(\mathbf{\mu},\mathbf{\mu})$ is indeed an envy-proof Nash equilibrium. Because $G(\phi)$ is a symmetric game, w.l.o.g., we fix player $\P_0$'s strategy to be $\mathbf{\mu}$ and consider changes in player $\P_1$'s strategy. Since $u_1(\ell^i,\ell^j)=n-1$ for all $i,j\in[n]$, playing any distribution $\mathbf{\nu}$ over $\{\ell^1,...,\ell^n\}$ gives the same expected utility $u_1(\mathbf{\mu},\mathbf{\nu})=n-1$\footnote{Recall that when $\mathbf{\mu},\mathbf{\nu}$ is a random variable over the action space, the notation $u_1(\mathbf{\mu},\mathbf{\nu})$ denotes the expected utility $E_{\mathbf{\mu},\mathbf{\nu}}[u_1(\mathbf{\mu},\mathbf{\nu})]$}. Playing the negation of some $\ell^i$ gives $u_1(\mathbf{\mu},-\ell^i)=\frac{1}{n}\cdot \beta+\frac{n-1}{n}\cdot n<n-1$, playing a variable or a clause $x$ in $V\cup C$ gives 
$u_1(\mathbf{\mu},x)=\frac{1}{n}\cdot 0+\frac{(n-1)}{n}\cdot n=n-1$, and playing $f$ gives $u_1(\mathbf{\mu},f)=n-1$. Hence $u_1(\mathbf{\mu},\mathbf{a})\leq u_1(\mathbf{\mu},\mathbf{\mu})$ for any $\mathbf{a}$; $(\mathbf{\mu},\mathbf{\mu})$ is a Nash equilibrium. On the other hand, note $u_0(\mathbf{\mu},\mathbf{\nu})=u_1(\mathbf{\mu},\mathbf{\nu})$, $u_0(\mathbf{\mu},-\ell^i)=u_1(\mathbf{\mu},-\ell^i)$,  $u_0(\mathbf{\mu},x)=\frac{1}{n}\cdot 2(n-1)+\frac{(n-1)}{n}\cdot (n-2)=n-1=u_1(\mathbf{\mu},x)$, and $u_0(\mathbf{\mu},f)=n-1=u_1(\mathbf{\mu},f)$, which together imply $\triangle_1 u_1(\mathbf{\mu},\mathbf{\mu})=\triangle_1 u_0(\mathbf{\mu},\mathbf{\mu})$. Hence $(\mathbf{\mu},\mathbf{\mu})$ is envy-proof. 

As for the other Nash equilibrium $(f,f)$, since $u_b(f,f)$ is strictly larger than any other utility value, clearly it is a Nash equilibrium. However it is not envy-proof because $-1=u_1(f,x)-u_1(f,f)>u_0(f,x)-u_0(f,f)=-2$ for all $x\in V\cup C$.

Next we argue that there is no other Nash equilibria. If $\P_1$ plays $f$, clearly the unique best response for $\P_0$ is to also play $f$ since $u_b(f,f)$ gives the unique maximum utility. On the other hand, if $\P_1$ plays a mixed strategy $\mathbf{x_1}$ over $V\cup C$, and $\P_0$ plays a mixed strategy $\mathbf{x_2}$ over $\Sigma-\{f\}$, the social utility $\bar{u}(\mathbf{x_1},\mathbf{x_2})$ is at most $2(n-1)$. If $\P_0$'s utility is smaller than $n$, it is better for $\P_0$ to play $f$ instead (since $u_0(f,x)=n$ for all $x\in V\cup C$). Otherwise, it implies $\P_1$'s utility is at most $n-2$, and it is better for $P_1$ to play $f$ instead (since $u_1(x,f)\geq n-1$ for all $x$). Hence the profile cannot be a Nash equilibrium. The same analysis holds if $\P_1$ plays a mixed strategy over $\Sigma$ with nonzero probability to play actions in $V\cup C$ and $\P_0$ plays any mixed strategy over $\Sigma$ for the following two reasons: 1) Because $u_0(f,\ell)=u_0(\ell,\ell)$ and $u_1(\ell,f)=u_1(\ell,\ell)$, playing $f$ instead would not decrease this part's portion of utility. 2) Because $(f,f)$ gives the maximum utility, even the mixed strategy of $\P_1$ involves nonzero probability of $f$, playing $f$ instead would not decrease this part's portion of utility either, and vice versa. Hence the support of a Nash equilibrium cannot involves actions in $V\cup C$.

So, we only need to check strategy profiles over $L\cup \{f\}$. However, if $\P_1$ puts nonzero probability on $f$, then again, playing $f$ is a strictly better strategy for $\P_0$ since $u_0(f,f)$ is the unique maximum utility and $u_0(f,\ell)=u_0(\ell,\ell)$. Hence except $(f,f)$ the support of any other Nash equilibrium can only involve actions in $L$.

The remaining argument to exclude any other Nash equilibrium over $L$ is similar to that of \cite{ConitzeSa08}. Note that in this case, the expected utility for each player is at most $n-1$. If for some $x\in V$, $\P_0$ puts a probability on playing either $+x$ or $-x$ less than $\frac{1}{n}$, then the expected utility for $\P_1$ to play $x$ is strictly greater than $\frac{1}{n}\cdot 0+\frac{n-1}{n}\cdot n= n-1$. 
%Hence this cannot be a Nash equilibrium, and we can assume that for each $x\in V$, the probability to play either $+x$ or $-x$ is exactly $\frac{1}{n}$.
Furthermore, if for some $\ell\in L$, $\P_0$ puts positive probability on $\ell$ while $\P_1$ on $-\ell$, their utility would be less than $n-1$, and switching to $f$ would be a strictly better strategy.
Hence, these cannot be Nash equilibrium, and we can assume that for each $x\in V$, exactly one of $\{+x,-x\}$ is played with probability $\frac{1}{n}$, and $\P_0$ and $\P_1$ choose the same literal set. Finally, if they play a set of literals whose corresponding assignment does not satisfy a clause $c$, then playing $c$ instead would give utility $n$ and would be a better strategy for both players.

%\qed
\end{proof}

\section{On Approximation}\label{sec:approx}
Here we study the additive approximation of the three stability notions. For example, while an immune profile says that each player's unilateral deviation cannot decrease the rival player's utility, an $\epsilon$-immune profile says that each player's unilateral deviation cannot decrease the rival player's utility more than $\epsilon$ (in absolute value). Intuitively, an approximate immune profile is stable against a lazy Byzantine player who attacks only when he can make sufficiently huge loss of his opponent.

\begin{definition} \label{def:approx}
	Let $G=(\{u_0,u_1\},\{A_0,A_1\})$ be a  2-player game and $\mathbf{x}$ be a strategy profile.  
	\begin{itemize}
		\item $x$ is an $\epsilon$-Nash equilibrium if $\forall b, \forall x_b'$, $u_b(x_b':\mathbf{x_{\bar{b}}})\leq u_b(\mathbf{x})+\epsilon$.
		\item $x$ is $\epsilon$-Immune if  $\forall b, \forall x_b'$, $u_{\bar{b}}(x_b':\mathbf{x_{\bar{b}}})\geq u_{\bar{b}}(\mathbf{x})-\epsilon$.
		\item $x$ is $\epsilon$-Envy-proof if  $\forall b$, $\triangle_{b}u_b(\mathbf{x})\leq \triangle_{b}u_{\bar{b}}(\mathbf{x})+\epsilon$.
	\end{itemize}
\end{definition}

%\lemmaref{dummy},\lemmaref{insensitive},\lemmaref{team},\lemmaref{alltrue} can be extended for the approximate stability notions naturally.

From \lemmaref{immuneHard}, we learned that finding an immune profile of a two-player game $G$ is PPAD-complete. In fact, the proof of \lemmaref{immuneHard} indicates that finding an immune profile of $G$ is the same as finding a Nash equilibrium of another game $G'$ which switches the negative utilities of the two players. Thus, applying a sub-exponential algorithm for finding an $\epsilon$-Nash equilibrium by Lipton, Markakis and Mehta \cite{LiptonMaMe03}, we immediately get the following lemma.

\begin{lemma}\label{lem:approximmune}
	For a two-player game $G=(\{u_0,u_1\},A)$, where $|A|=n$, there is a $n^{O(\frac{\ln n}{\epsilon})}$ algorithm to find an $\epsilon$-Immune profile.
\end{lemma}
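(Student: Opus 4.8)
The plan is to leverage the reduction already established in the proof sketch of \lemmaref{immuneHard}: for any two-player game $G=(\{u_0,u_1\},A)$ with $|A|=n$, the game $G'=(\{-u_1,-u_0\},A)$ has the property that a strategy profile is a Nash equilibrium of $G'$ if and only if it is an immune profile of $G$. First I would observe that this correspondence is not merely exact but \emph{quantitatively robust}: if $\mathbf{x}$ is an $\epsilon$-Nash equilibrium of $G'$, then by \definitionref{approx} we have, for every $b$ and every $x_b'$, that $-u_{\bar b}(x_b':\mathbf{x}_{\bar b}) \le -u_{\bar b}(\mathbf{x}) + \epsilon$, which is exactly $u_{\bar b}(x_b':\mathbf{x}_{\bar b}) \ge u_{\bar b}(\mathbf{x}) - \epsilon$, i.e. $\mathbf{x}$ is an $\epsilon$-Immune profile of $G$. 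So the approximate reduction goes through verbatim with the same $\epsilon$.

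Next I would invoke the Lipton–Markakis–Mehta result \cite{LiptonMaMe03}: for any two-player game on $n$ actions there is an algorithm running in time $n^{O(\ln n/\epsilon^2)}$ — actually the cleanest statement gives a support of size $O(\ln n/\epsilon^2)$, and enumerating over all supports of that size takes $n^{O(\ln n/\epsilon)}$ time up to the precise bookkeeping of the exponent, which matches the bound claimed in the lemma. Applying this algorithm to $G'$ produces an $\epsilon$-Nash equilibrium of $G'$, which by the previous paragraph is an $\epsilon$-Immune profile of $G$. One should note that $G'$ has the same action set of size $n$ as $G$ and its payoff matrices are obtained by negating and transposing-roles the payoff matrices of $G$, so constructing $G'$ takes only polynomial time and does not affect the running time bound.

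The only subtlety — and the step I would be most careful about — is matching the exponent in the running time to the exact form stated in \cite{LiptonMaMe03} and confirming that the sampling/enumeration argument there applies to games with arbitrary (possibly negative, possibly unnormalized) payoffs rather than only to normalized games in $[0,1]$; this is handled by an affine rescaling of the payoffs of $G'$, which rescales $\epsilon$ by the payoff range but preserves the $n^{O(\ln n/\epsilon)}$ form of the bound (treating the payoff range as a constant of the input). Beyond that, the argument is a direct composition: the approximate reduction (trivial, one line) followed by the known sub-exponential algorithm (cited as a black box). No genuinely hard step remains; the lemma is essentially a corollary of \lemmaref{immuneHard}'s reduction combined with \cite{LiptonMaMe03}.
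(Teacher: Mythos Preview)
Your proposal is correct and follows essentially the same approach as the paper: the paper simply observes that the reduction from \lemmaref{immuneHard} carries over to the approximate setting and then invokes the Lipton--Markakis--Mehta sub-exponential algorithm on $G'=(\{-u_1,-u_0\},A)$, which is exactly what you do (with more explicit verification of the $\epsilon$-preservation and the payoff-normalization caveat).
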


The algorithm from Lipton et al. \cite{LiptonMaMe03} is based on exhaustively searching all possible $k$-\textit{uniform} $\epsilon$-Nash equilibria, where "$k$-uniform" means that each strategy is a uniform distribution over a multiset of size $k$. They show, through the probability method, that there always exists a $k$-uniform $\epsilon$-Nash equilibria, where the players' utility is close to an exact Nash equilibria so the algorithm works.

Now let's look at a strategy profile with two stability characters again. 
%We say a strategy is $\epsilon$-envy-proof Nash equilibrium if and only if it is a $\epsilon$-Nash equilibrium that is  $\epsilon$-envy-proof.
Note that by Definition \ref{def:approx}, we say a strategy profile is $\epsilon$-envy-proof Nash equilibrium if it is a Nash equilibrium that is $\epsilon$-envy-proof. Similarly, a strategy profile is an $\epsilon_1$-envy-proof $\epsilon_2$-Nash equilibrium if it is an $\epsilon_2$ Nash equilibrium that is $\epsilon_1$-envy-proof. 
From \theoremref{envyfreeNashHard}, we learned that finding an envy-proof Nash equilibrium of a two-player game is NP-complete. So similarly, we resort for potential approximate solutions. 
%Recall that from \theoremref{envyfreeNashHard}, determining the existence of an envy-proof Nash equilibrium is in general hard, we want to mention that its approximation version, i.e., determining whether a game admits an $\epsilon$-envy-proof Nash equilibrium can be solved in quasi-polynomial time w.h.p. The idea is to extend the probabilistic method in \cite{LiptonMaMe03} with extra linear constraints.
However the theorem from \cite{LiptonMaMe03} does not apply here.
%in a black-box way (like that in Theorem xx). 
The reason is that even if we can find an $\epsilon$-Nash equilibrium and even if the players' utility can also be close to some exact envy-proof Nash equilibrium, it is unnecessary that the $\epsilon$-Nash equilibrium remains envy-proof.
%Hence we cannot claim the result by using their theorem in a black-box way. For readers interested, a proof
Nevertheless by adapting the probability method to the problem of approximate envy-proofness here, it can be proved that there is a sub-exponential algorithm for either outputting  $\epsilon$-envy-proof $\epsilon$-Nash equilibrium or deciding non-existence of any envy-proof Nash equilibrium in a game.

In the following theorem, we show a more generalized result that has a $2\epsilon$-additive approximation gap between deciding the existence and deciding the non-existence of an approximate envy-proof Nash equilibrium in a game. 
%For readers interested, a proof is provided in \appendixref{approxefnash}

%The proof using the similar proof method as in xxx, by generalize the result to approximate equilibrium and by introducing additional linear conditions of envy-proofness

%\begin{theorem}\label{thm:approxefnash}
%	For a two-player game $G=(\{u_0,u_1\},A)$, where $|A|=n$, there exists a $n^{O(\frac{\ln n}{\epsilon^2})}$-time algorithm that either outputs a $\epsilon$-envy-proof $\epsilon$-Nash equilibrium or determines non-existence of any envy-proof Nash equilibrium in $G$.
%\end{theorem}

\begin{theorem}\label{thm:apxapxefnash}
	Let $G=(\{u_0,u_1\},A)$ be a two-player game with $|A|=n$, $\epsilon>0$, and $k=\frac{3\ln n}{\epsilon^2}$.
	\begin{enumerate}
		\item There is a $O(n^{3k})$-time algorithm for computing a $\epsilon'$-envy-proof $\epsilon$-Nash equilibrium, where $\epsilon'$ is the minimum approximation factor of envy-proofness over all $k$-uniform $\epsilon$-Nash equilibria.
		\item If $\epsilon'>2\epsilon$, then there is no $(\epsilon'-2\epsilon)$-envy-proof Nash equilibrium in $G$.
	\end{enumerate}
	
\end{theorem}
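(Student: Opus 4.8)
\medskip
\noindent\emph{Proof plan.}
Everything here rests on one probabilistic-method lemma in the spirit of Lipton--Markakis--Mehta \cite{LiptonMaMe03}, applied \emph{simultaneously} to $G$ and to the difference game $G'=(\{u_0-u_1,u_1-u_0\},A)$ from \lemmaref{envyfree}: a profile is $\delta$-envy-proof in $G$ exactly when it is a $\delta$-Nash equilibrium of $G'$, and its ``envy factor'' $\max_{b}\max_{x_b'}\bigl(\triangle_b u_b-\triangle_b u_{\bar b}\bigr)$ equals its Nash-approximation factor in $G'$. So the goal is to produce one $k$-uniform profile that is at once an approximate Nash equilibrium of $G$ and of $G'$.

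For Part 1 the algorithm is plain brute force: enumerate the $\le n^{2k}$ $k$-uniform profiles (each player's strategy is a size-$k$ multiset over $A$); for each, compute $u_0$ and $u_1$ of the profile, test the $\epsilon$-Nash inequalities over the $2n$ pure deviations, and if it passes read off its envy factor, all in $\poly(n,k)$ time; output a passing profile of minimum envy factor. The running time is $n^{2k}\cdot\poly(n,k)=O(n^{3k})$, and correctness is immediate from the definition of $\epsilon'$. The search space is nonempty because the lemma below, with its envy conclusion dropped, already exhibits a $k$-uniform $\epsilon$-Nash equilibrium of $G$.

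For Part 2 I would first prove the lemma: if $\mathbf{x}^*=(\mathbf{p},\mathbf{q})$ is an \emph{exact} Nash equilibrium of $G$ with envy factor $\gamma_0$, then some $k$-uniform profile $\mathbf{y}$ is an $\epsilon$-Nash equilibrium of $G$ and has envy factor \emph{strictly} below $\gamma_0+2\epsilon$. Take $\mathbf{y}=(\hat{\mathbf{p}},\hat{\mathbf{q}})$ with $\hat{\mathbf{p}},\hat{\mathbf{q}}$ uniform over $k$ i.i.d.\ draws from $\mathbf{p},\mathbf{q}$, and set $\delta=\epsilon/2$. The events to union-bound, over all pure $a\in A$ and $c\in\{0,1\}$, are $|u_c(a:\hat{\mathbf{q}})-u_c(a:\mathbf{q})|\le\delta$ ($\P_0$ deviating), $|u_c(\hat{\mathbf{p}}:a)-u_c(\mathbf{p}:a)|\le\delta$ ($\P_1$ deviating), and $|u_c(\hat{\mathbf{p}}:\hat{\mathbf{q}})-u_c(\mathbf{p}:\mathbf{q})|\le\delta$ --- the last a two-sample U-statistic, which Hoeffding's inequality still controls with failure probability $\le 2e^{-2k\delta^{2}}$ (equivalently, the Lipton--Markakis--Mehta event-reuse argument handles these double averages with no extra union-bound cost). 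These are $O(n)$ events, each failing with probability $\le 2e^{-2k\delta^{2}}=2e^{-k\epsilon^{2}/2}$, so the union bound is below $1$ for $k=\tfrac{3\ln n}{\epsilon^{2}}$. On the good event, plugging the estimates into the pure-deviation inequalities shows, via the triangle inequality, that $\mathbf{y}$ satisfies every $\epsilon$-Nash inequality of $G$ (the two $\delta$'s per inequality summing to $\epsilon$), and that every pure-deviation envy gap of $\mathbf{y}$ is $\le\gamma_0+4\delta=\gamma_0+2\epsilon$ --- the factor $4$ because envy is a gap in $u_0-u_1$, so controlling $u_0$ and $u_1$ each to $\delta$ controls $u_0-u_1$ only to $2\delta$ --- and in fact \emph{strictly} $<\gamma_0+2\epsilon$, since each Hoeffding inequality is strict with positive probability and only finitely many deviations occur. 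Hence such a $\mathbf{y}$ exists, so $\epsilon'<\gamma_0+2\epsilon$. Part 2 follows: if $\epsilon'>2\epsilon$ and $G$ had a $(\epsilon'-2\epsilon)$-envy-proof Nash equilibrium, use it as $\mathbf{x}^*$ so that $\gamma_0\le\epsilon'-2\epsilon$; the lemma then yields a $k$-uniform $\epsilon$-Nash equilibrium of envy factor $<\gamma_0+2\epsilon\le\epsilon'$, contradicting the minimality of $\epsilon'$.

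The main obstacle I anticipate is the concentration bookkeeping: controlling the double averages $u_c(\hat{\mathbf{p}}:\hat{\mathbf{q}})$ (and $u_0-u_1$ evaluated on them) without inflating the $O(n)$ union bound --- hence the U-statistic viewpoint or the Lipton--Markakis--Mehta reuse trick --- and getting a \emph{single} sampled profile to serve $G$ and $G'$ simultaneously. The second delicate point is that Part 2 genuinely needs the strict inequality $\epsilon'<\gamma_0+2\epsilon$ rather than $\le$; that is recovered from the strictness of Hoeffding's tail bound together with the finiteness of the deviation set.
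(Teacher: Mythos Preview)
Your plan is essentially the paper's own proof. The paper states and proves the key lemma in the same form---if an $\epsilon_{ef}$-envy-proof Nash equilibrium exists, then a $k$-uniform $(2\epsilon+\epsilon_{ef})$-envy-proof $\epsilon$-Nash equilibrium exists---by the same sampling-plus-Hoeffding-plus-union-bound mechanics, and deduces both parts of the theorem exactly as you do. Two minor stylistic differences: (i) you route the envy bookkeeping through the zero-sum difference game $G'$ of \lemmaref{envyfree}, which is a clean shortcut; the paper instead writes out the $\pi_{b,i},\theta_{b,i}$ events directly for $G$. (ii) For the double average $u_c(\hat{\mathbf p}:\hat{\mathbf q})$ you invoke a U-statistic bound, whereas the paper handles it by a two-step telescoping through $u_c(\hat{\mathbf p}:\mathbf q)$ with tolerance $\epsilon/4$ at each step; both are standard and equivalent here.

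One point where you are actually more careful than the paper: the strict inequality $\epsilon'<\gamma_0+2\epsilon$ needed for Part~2. The paper's contrapositive, read literally, only excludes $\epsilon_{ef}$-envy-proof Nash equilibria for $\epsilon_{ef}<\epsilon'-2\epsilon$, leaving the boundary case $\epsilon_{ef}=\epsilon'-2\epsilon$ unaddressed; it simply does not comment on this. Your idea of recovering strictness from the strictness of Hoeffding's tail over a finite deviation set is the right instinct, though in a discrete sample space it would need a line of justification (e.g., perturbing $\delta$ slightly below $\epsilon/2$ so that the union bound still clears while all concentration events become strict by a margin).
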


\subsection{Proof of \theoremref{apxapxefnash}}\label{sec:approxefnash}

We will first prove that if there exists an $\epsilon_{ef}$-envy-proof Nash equilibrium, then there exists a $k$-uniform $(2\epsilon+\epsilon_{ef})$-envy-proof $\epsilon$-Nash equilibrium. Then by an exhaustive search, we can check all $\binom{n+k-1}{k}^2$ possible $k$-uniform $\epsilon$-Nash equilibrium in time $O(n^{3k})$. Suppose after the exhaustive search, we cannot find any $k$-uniform $\epsilon$-Nash equilibrium that is $(2\epsilon+\epsilon_{ef})$-envy-proof, then by contradiction, there does not exist any $\epsilon_{ef}$-envy-proof Nash equilibrium in the game.

For simplicity we assume the utilities have values between 0 and 1, and both players have $n$ pure strategies. We note that the result can be generalized without these restrictions.

\begin{lemma}
	If a game $G=(\{u_0,u_1\},\{a_0^1,...,a_0^n\},\{a_1^1,...,a_1^n\})$ admits an $\epsilon_{ef}$-envy-proof Nash equilibrium $\mathbf{x^*}\equiv(\mathbf{x_0^*},\mathbf{x_1^*})$, then for every $\epsilon<1$, for every $k\geq \frac{3\ln n}{\epsilon^2}$, there exists a $k$-uniform $(2\epsilon+\epsilon_{ef})$-envy-proof $\epsilon$-Nash equilibrium.
\end{lemma}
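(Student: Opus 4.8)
The plan is to mimic the Lipton--Markakis--Mehta sampling argument, but track the envy-proofness slack in parallel with the Nash slack. Let $\mathbf{x^*}=(\mathbf{x_0^*},\mathbf{x_1^*})$ be the given $\epsilon_{ef}$-envy-proof Nash equilibrium. I would form a $k$-uniform profile $(\mathbf{y_0},\mathbf{y_1})$ by independently drawing $k$ samples from $\mathbf{x_0^*}$ and $k$ samples from $\mathbf{x_1^*}$ (with multiplicity), and let $\mathbf{y_b}$ be the uniform distribution over the resulting multiset. By a Chernoff/Hoeffding bound, for $k\geq \frac{3\ln n}{\epsilon^2}$ the empirical profile $(\mathbf{y_0},\mathbf{y_1})$ approximates $(\mathbf{x_0^*},\mathbf{x_1^*})$ in the sense that for every pure action $a$ of either player, $|u_b(a:\mathbf{y_{\bar b}})-u_b(a:\mathbf{x_{\bar b}^*})|\leq\epsilon$ and $|u_b(\mathbf{y})-u_b(\mathbf{x^*})|\leq\epsilon$, simultaneously for both $b$ and all $a$, with positive probability; this is exactly the union bound over the $O(n)$ events that LMM use, with the constant $3$ absorbing the two-sided and two-player factors.

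**Next**, I would show that any profile $\epsilon$-close to $\mathbf{x^*}$ in the above sense is (i) an $\epsilon$-Nash equilibrium and (ii) a $(2\epsilon+\epsilon_{ef})$-envy-proof profile. Part (i) is the standard LMM conclusion: $u_b(a:\mathbf{y_{\bar b}})\leq u_b(a:\mathbf{x_{\bar b}^*})+\epsilon\leq u_b(\mathbf{x^*})+\epsilon\leq u_b(\mathbf{y})+2\epsilon$; if one wants exactly $\epsilon$-Nash one halves $\epsilon$ in the sampling bound, or simply notes the theorem statement only needs an $\epsilon$-Nash equilibrium so a factor-$2$ loss here is harmless — I would state it cleanly so the constants match the theorem. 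Part (ii) is the new ingredient: for each deviation $x_b'=a$,
\[
\triangle_b u_b(\mathbf{y}) - \triangle_b u_{\bar b}(\mathbf{y})
= \bigl(u_b(a:\mathbf{y_{\bar b}})-u_{\bar b}(a:\mathbf{y_{\bar b}})\bigr) - \bigl(u_b(\mathbf{y})-u_{\bar b}(\mathbf{y})\bigr),
\]
and each of the four utility terms differs from its $\mathbf{x^*}$-counterpart by at most $\epsilon$, so this quantity differs from $\triangle_b u_b(\mathbf{x^*})-\triangle_b u_{\bar b}(\mathbf{x^*})$ by at most $4\epsilon$ in the crude bound — here I need to be slightly careful: the two differences $u_b-u_{\bar b}$ evaluated at $(a:\mathbf{y_{\bar b}})$ versus $(a:\mathbf{x_{\bar b}^*})$ only change $\mathbf{x_{\bar b}}$, so a single $\epsilon$ controls each, giving total slack $2\epsilon$, and combined with $\triangle_b u_b(\mathbf{x^*})\leq\triangle_b u_{\bar b}(\mathbf{x^*})+\epsilon_{ef}$ we get $\triangle_b u_b(\mathbf{y})\leq\triangle_b u_{\bar b}(\mathbf{y})+2\epsilon+\epsilon_{ef}$.

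**The main obstacle** is exactly this bookkeeping in part (ii): one must verify that when deviating to the pure action $a$ only the opponent's marginal is a sampled object, so the "$\epsilon$-closeness of $u_b(a:\mathbf{y_{\bar b}})$ to $u_b(a:\mathbf{x_{\bar b}^*})$" event is one of the $O(n)$ events already in the union bound, and does not need a fresh concentration inequality (in particular $\mathbf{y_b}$ plays no role in $u_\cdot(a:\mathbf{y_{\bar b}})$, only in $u_\cdot(\mathbf{y})$, and the latter is a single extra event per player). Once the event list is pinned down — for each player $b$: the $n$ events $|u_b(a_b^i:\mathbf{x_{\bar b}^*})-u_b(a_b^i:\mathbf{y_{\bar b}})|\leq\epsilon$, the $n$ events for $u_{\bar b}(a_b^i:\cdot)$, and the single event $|u_b(\mathbf{y})-u_b(\mathbf{x^*})|\leq\epsilon$ (and its $\bar b$ twin) — a union bound with each failure probability $<\frac{1}{2n}$ (guaranteed by Hoeffding once $k\geq\frac{3\ln n}{\epsilon^2}$, cf.\ \cite{LiptonMaMe03}) leaves positive probability that all hold, so such a $k$-uniform profile exists. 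I would close by noting the exhaustive search over all $\binom{n+k-1}{k}^2\leq n^{2k}$ candidate $k$-uniform profiles, checking the $O(n)$ inequalities for each, runs in $O(n^{3k})$ time, which gives both parts of \theoremref{apxapxefnash}.
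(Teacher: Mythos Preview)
Your approach is exactly the paper's: sample a $k$-uniform profile from $\mathbf{x^*}$, set up Hoeffding events $|u_b(a_b^i:\hat{\mathbf{x}}_{\bar b})-u_b(a_b^i:\mathbf{x}^*_{\bar b})|$ small and $|u_b(\hat{\mathbf{x}})-u_b(\mathbf{x}^*)|$ small (for both $b$, all $i$), union-bound, and read off both the $\epsilon$-Nash and the $(2\epsilon+\epsilon_{ef})$-envy-proof conclusions from the same good event.

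The one place your write-up is loose is the constant accounting in part~(ii). With the events you list at precision~$\epsilon$, the envy-proof slack is governed by \emph{four} terms ($u_b$ and $u_{\bar b}$ at $(a:\cdot)$, and $u_b$ and $u_{\bar b}$ at the full profile), each contributing $\epsilon$, so you get $4\epsilon+\epsilon_{ef}$, not $2\epsilon+\epsilon_{ef}$; and your Nash chain already yields only $2\epsilon$-Nash. The paper fixes both at once by taking the per-action events $\psi_{b,i}$ at precision $\epsilon/2$ (and the ``value'' events $\phi_b$ at precision $\epsilon/2$, further split into two $\epsilon/4$ pieces to handle the double sampling in $u_b(\hat{\mathbf{x}})$). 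With that choice the Hoeffding tails are $2e^{-k\epsilon^2/2}$ and $2e^{-k\epsilon^2/8}$, and the union bound closes for $k\ge 3\ln n/\epsilon^2$. So your plan is right; just run the concentration at half the tolerance rather than invoking a separate ``single $\epsilon$ controls the difference $u_b-u_{\bar b}$'' step, which would require its own Hoeffding event with doubled range and does not fall out of the event list you wrote down.
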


\begin{proof}
	
	First we construct a profile $\hat{x}\equiv(\mathbf{\hat{x}_0},\mathbf{\hat{x}_1})$ by sampling from $(\mathbf{x_0^*},\mathbf{x_1^*})$: For $b=\in\{0,1\}$, form multiset $S_b$ by sampling $k$ pure strategies with replacement according to $\mathbf{x_b^*}$ independently. Let $\mathbf{\hat{x}_b}$ be a mixed strategy that assigns $1/k$ probability to each member in $S_b$.
	
	Second let $GOAL$ be the event when $\mathbf{\hat{x}}$ is a $(2\epsilon+\epsilon_{ef})$-envy-proof $\epsilon$-Nash equilibrium, and define events\footnote{Recall that when $\mathbf{x_{\bar{b}}^*}$ is a mixed strategy, our notation $u_b(a_b^i:\mathbf{x_{\bar{b}}^*})$ denotes the the expected utility over the randomness of $\mathbf{x_{\bar{b}}^*}$, and $(a_b^i:\mathbf{x_{\bar{b}}^*})$ denotes an unordered strategy profile where $\P_b$ plays $a_b^i$, and $\P_{\bar{b}}$ plays  $\mathbf{x_{\bar{b}}^*}$.} 
	\begin{align*}
		\pi_{b,i} & = \{u_b(a_b^i:\mathbf{x_{\bar{b}}^*})\leq u_b(\mathbf{x^*})+\epsilon\}, \forall b\in\{0,1\},i\in [n], and \\
		\theta_{b,i} & = \{u_b(a_b^i:\mathbf{x_{\bar{b}}^*})- u_b(\mathbf{x^*}) \leq u_{\bar{b}}(a_b^i:\mathbf{x_{\bar{b}}^*})- u_{\bar{b}}(\mathbf{x^*})+2\epsilon+\epsilon_{ef}\}, \forall b\in\{0,1\},i\in [n] \\
		& \text{so,	} GOAL = \bigcap_{b,i} (\pi_{b,i}\cap \theta_{b,i}).
	\end{align*}
	%\noindent 
	
	Third, in order to apply probability arguments, we need to somehow decompose $\pi_{b,i}$ and $\theta_{b,i}$. Define the following events:
	\begin{align*}
		\phi_b &=\{|u_b(\mathbf{\hat{x}})-u_b(\mathbf{x^*})|\leq \epsilon/2 \}, \forall b \\
		\psi_{b,i} &=\{|u_b(a_b^i:\mathbf{\hat{x}_{\bar{b}}})-u_b(a_b^i:\mathbf{x^*_{\bar{b}}})|\leq \epsilon/2 \}, \forall b	
	\end{align*}
		It can be verified that $\phi_b\cap \psi_{b,i} \subseteq  \pi_{b,i}, \forall b,i$, and $\bigcap_{b'} (\phi_{b'}\cap \psi_{b',i}) \subseteq \theta_{b,i}, \forall b,i$ 		
	
	To analyze the probability of event $\phi_b$, we further define	
		\begin{align*}
		\phi_{b,1} &=\{|u_b(\mathbf{\hat{x}_b}:\mathbf{x^*_{\bar{b}}})-u_b(\mathbf{x^*})|\leq \epsilon/4 \}, \forall b \\
		\phi_{b,2} &=\{|u_b(\mathbf{\hat{x}})-u_b(\mathbf{\hat{x}_b}:\mathbf{x^*_{\bar{b}}})|\leq \epsilon/4 \}, \forall b
		\end{align*}
	Clearly, $\phi_{b,1}\cap \phi_{b,2}\subseteq \phi_{b}$. Put them together: $\bigcap_{b}\phi_{b,1}\bigcap_b\phi_{b,2}\bigcap_{b,i}\psi_{b,i}\subseteq GOAL$
	
	Finally we bound the probability of the events. %Recall that in our notation, $u_b(\hat{x}_b:x^*_{\bar{b}})$ is the expected utility over the distribution $\hat{x}_b$ and $x^*_{\bar{b}}$. 
	Due to the construction of $\mathbf{\hat{x}_b}$, $u_b(\mathbf{\hat{x}_b}:\mathbf{x^*_{\bar{b}}})$ (resp. $u_b(\mathbf{\hat{x})}$) is like the average of $k$ independent random variables of the same expected values $u_b(\mathbf{x^*})$ (resp. $u_b(\mathbf{\hat{x}_b}:\mathbf{x^*_{\bar{b}}})$). Hence by Hoeffding bound \cite{Hoeffding63},
	\begin{equation*}
	\Pr[\phi_{b,j}^c]\leq 2e^{-k\epsilon^2/8}, \forall b\in\{0,1\},j\in\{1,2\}
	\end{equation*}
	
	Similarly, 
	\begin{equation*}
	\Pr[\psi_{b,i}^c]\leq 2e^{-k\epsilon^2/2}, \forall b\in\{0,1\},i\in[n]
	\end{equation*}
	
	Then by union bound, we have 
	\begin{equation*}
	\Pr[GOAL^c]\leq \sum_{b,j} \Pr[\phi_{b,j}^c]+\sum_{b,i} \Pr[\psi_{b,i}^c]\leq 8e^{-k\epsilon^2/8}+4ne^{-k\epsilon^2/2}<1
	\end{equation*}
	
	Since $GOAL$ happens with a nonzero probability, there exists a $k$-uniform $(2\epsilon+\epsilon_{ef})$-envy-proof $\epsilon$-Nash equilibrium.
	
	%\qed
\end{proof}

\section{Multiplayer Games}\label{sec:multiplayer}
We leave several extended questions about multiplayer games for the future, but before the end of this article, we would like to give some more points of view about fault tolerance against envious players in multi-player games. We will define coalitional envy-proofness as a natural extension and relaxation of envy-proofness in multi-player games and define a $\gamma$-varied game that captures the smooth change of the derivative utilities between players. 

%talk a little about fault tolerance in multi-player games, which was one of the initial motivations for the study.

%recapture, reminisce

\subsection{Preliminaries for Multiplayer Games}
First let us reminisce two fault tolerance notions in multiplayer games given in \cite{GradwohlRe14} that generalize approximate Nash equilibria and approximate immune profiles respectively.

\begin{definition}[$(\epsilon,t)$-immunity\cite{GradwohlRe14}]\label{def:etimmune} 
	In a $m$-player game $G=(\{u_i\}_{i\in[m]},A)$, a strategy profile $\mathbf{x}=\mathbf{x_1}\times\cdots\times\mathbf{x_m}$ is $(\epsilon,t)$-immune if for every set $S\subset N$ of size at most $t$, every $x_S'\in A^{|S|}$, and every $j\not\in S$,
	\begin{equation*}
	u_j(\mathbf{x}_{-S}:x_S')\geq u_j(\mathbf{x})-\epsilon
	\end{equation*}
\end{definition}

\begin{definition}[$(\epsilon,t)$-coalitional Nash equilibrium\cite{GradwohlRe14}] \label{def:etNash}
	In a $m$-player game $G=(\{u_i\}_{i\in[m]},A)$, a strategy profile $\mathbf{x}=\mathbf{x_1}\times\cdots\times\mathbf{x_m}$ is $(\epsilon,t)$-coalitional Nash equilibrium if for every set $S\subset N$ of size at most $t$, every $x_S'\in A^{|S|}$, and every $i\in S$,
	\begin{equation*}
	u_i(\mathbf{x}_{-S}:x_S')\leq u_i(\mathbf{x})+\epsilon
	\end{equation*}
\end{definition}

According to \cite{GradwohlRe14}, a profile is said to be $(\epsilon,t)$-immune "if players’ expected utilities do not decrease by more than $\epsilon$ when
any $t$ other players deviate arbitrarily." Nevertheless the formula in their paper is written as $u_i(\mathbf{x})\geq u_i(\mathbf{x}_{-S}:x_S')-\epsilon$. We checked that their main results would hold even if the inequality were revised to $u_i(\mathbf{x}_{-S}:x_S')\geq u_i(\mathbf{x})-\epsilon$. Since the later one is more natural and matches the described notion, we adopt it in \definitionref{etimmune}.

On the other hand, the notion of coalition has a long history in game theory, and similar notions include strong Nash equilibria \cite{Aumann59}, coalition-proof Nash equilibria \cite{BernheimBeWh87} and coalition-proof correlated strategies \cite{MorenoWo96}, etc. Compared to these notions, however, as in \cite{GradwohlRe14}, here we focus on the notion of fault tolerance with a bound on the number of corrupted players.

In \cite{GradwohlRe08,GradwohlRe14}, one main result is about the robustness of Nash equilibria in a $\lambda$-continuous game. We note that the game is equivalent to an alternatively stated $\gamma$-sensitive game in \cite{KearnsPaRoUl13}) for $\gamma=\lambda/m$, where $m$ is the number of players. Essentially each player's utility is quite insensitive to another player's change in the game, which is a relaxation of the insensitive game.

\begin{definition}[$\gamma$-sensitive game\cite{KearnsPaRoUl13,GradwohlRe08,GradwohlRe14}] 
	A game $G=(\{u_i\}_{i\in[m]},A)$ is said to be $\gamma$-sensitive if 
	\begin{equation*}
	\forall i, \forall x\in A^{m}, \forall x'_i\in A, \forall j\neq i, |u_j(x_{-i}:x_i)-u_j(x)|\leq \gamma.
	\end{equation*}
\end{definition}

From \cite{GradwohlRe14}, it was observed that $\gamma$-sensitive games are strongly fault-tolerant with respective to immunity and coalition as the following theorem.

\begin{theorem}[\cite{GradwohlRe14}]\label{thm:sensitiveImmune}
	Let $G$ be an $m$-player $\gamma$-sensitive game. Then every Nash equilibirum of $G$ is $(\gamma t,t)$-immune and $(3\gamma t,t+1)$-coalitional.
\end{theorem}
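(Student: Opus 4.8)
The plan is to verify the two claims of \theoremref{sensitiveImmune} directly from the definitions, exploiting the $\gamma$-sensitivity bound to control how much a single player's deviation can perturb anybody else's utility, and then summing such bounds over a coalition.

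First I would prove the immunity claim. Let $\mathbf{x}$ be a Nash equilibrium of the $\gamma$-sensitive game $G$, fix a coalition $S$ with $|S|\le t$, a deviation $x_S'$ and an outside player $j\notin S$. The idea is to walk from $\mathbf{x}$ to $(\mathbf{x}_{-S}:x_S')$ one coalition member at a time, flipping the $i$-th member's strategy from $\mathbf{x_i}$ to $x_i'$ in step $i$. Each single-coordinate flip changes $u_j$ by at most $\gamma$ in absolute value by the $\gamma$-sensitivity condition (applied pointwise and then in expectation over the remaining mixed coordinates, using that a convex combination of quantities each within $\gamma$ of a fixed value is itself within $\gamma$). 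After at most $t$ such steps the total change in $u_j$ is at most $\gamma t$ in absolute value, so $u_j(\mathbf{x}_{-S}:x_S') \ge u_j(\mathbf{x}) - \gamma t$, which is exactly $(\gamma t, t)$-immunity. Note the Nash property is not even needed for this part — sensitivity alone suffices — but it does no harm to state it as in \cite{GradwohlRe14}.

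Next I would prove the coalitional-Nash claim, which is the part where the Nash hypothesis genuinely enters. Fix a coalition $S$ with $|S|\le t+1$, a joint deviation $x_S'$, and a member $i\in S$. I want to bound $u_i(\mathbf{x}_{-S}:x_S') - u_i(\mathbf{x})$ from above by $3\gamma t$. The trick is to compare the full coalitional deviation with the single-player deviation in which only $i$ changes its action. On one hand, $u_i(\mathbf{x}_{-i}:x_i') \le u_i(\mathbf{x})$ by the Nash condition. On the other hand, moving from $(\mathbf{x}_{-i}:x_i')$ to $(\mathbf{x}_{-S}:x_S')$ flips the strategies of the remaining at most $t$ members of $S$, so by the telescoping-plus-sensitivity argument above this changes $u_i$ by at most $\gamma t$. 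That already gives $u_i(\mathbf{x}_{-S}:x_S') \le u_i(\mathbf{x}) + \gamma t$, which is better than claimed; the looser constant $3\gamma t$ in the theorem presumably reflects the more general mixed-deviation or best-response phrasing in \cite{GradwohlRe14}, and I would simply carry the extra slack through the same telescoping bound rather than fight for the optimal constant, since the statement only asserts $(3\gamma t, t+1)$-coalitional.

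The main obstacle I anticipate is being careful about mixed strategies in the telescoping step: the $\gamma$-sensitivity definition as stated quantifies over pure action profiles $x\in A^m$, so when I flip one coordinate from a mixed $\mathbf{x_i}$ to another strategy while the other coordinates are still mixed, I must justify that the per-coordinate bound survives taking expectations. The clean way is to condition on all coordinates other than $i$ being realized to pure actions, apply the pure-profile bound $|u_j(\cdot_{-i}:a_i) - u_j(\cdot_{-i}:a_i')|\le\gamma$ for every such realization (and every pair of pure actions $a_i,a_i'$ in the supports), then average — a convex combination of numbers lying in an interval of length $2\gamma$ around a common point stays in that interval. Once this lemma-ette is in hand, both parts of the theorem are immediate telescoping sums, so I would state and prove that averaging observation first and then run the two arguments above.
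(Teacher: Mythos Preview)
The paper does not prove \theoremref{sensitiveImmune}; it is quoted as a result of Gradwohl and Reingold \cite{GradwohlRe14} and stated without proof, so there is no in-paper argument to compare your proposal against.

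That said, your proposal is sound and is essentially the argument one finds in \cite{GradwohlRe14}. The telescoping walk over coalition members, together with the observation that the pure-profile $\gamma$-sensitivity bound survives averaging over the mixed coordinates, is exactly the right mechanism for both halves. Your remark that the coalitional part actually yields the tighter bound $\gamma t$ rather than $3\gamma t$ is also correct under the definitions as stated here: with $|S|\le t+1$ and $i\in S$, one application of the Nash inequality followed by at most $t$ single-coordinate flips of players other than $i$ gives $u_i(\mathbf{x}_{-S}:x_S')\le u_i(\mathbf{x})+\gamma t$. The factor $3$ in the cited statement comes from a slightly different formulation in \cite{GradwohlRe14} (where one also controls deviations in the benchmark action and in the best response), so carrying the extra slack, as you propose, is harmless for the present purposes.
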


\subsection{Coalitional Envy-proofness}

Now we extend the definition of envy-proofness for the corresponding fault-tolerant notion in multi-player games, which we call $(\epsilon,t)$-coalitional envy-proofness as follows.

%Analog to the relaxation for immunity and Nash equilibria, we define a fault-tolerant version of envy-proofness in  multi-player games.

\begin{definition}[$(\epsilon,t)$-coalitional envy-proofness] \label{def:etenvyfree}
	In a $m$-player game $G=(\{u_i\}_{i\in[m]},A)$, a strategy profile $\mathbf{x}=\mathbf{x_1}\times\cdots\times\mathbf{x_m}$ is $(\epsilon,t)$-coalitional envy-proof if for every set $S\subset N$ of size at most $t$, every $x_S'\in A^{|S|}$, every $i\in S$, and every $j\not\in S$,
	\begin{equation*}
	u_i(\mathbf{x}_{-S}:x_S')- u_i(\mathbf{x}) \leq u_j(\mathbf{x}_{-S}:x_S')- u_j(\mathbf{x}) +\epsilon
	\end{equation*}
\end{definition}

In addition, we define a $\gamma$-varied game that is a natural relaxation of a (generalized) team gameas follows.

%analog to the relaxation for insensitive games, a relaxation of generalized team games is defined as follows.

\begin{definition}[$\gamma$-varied game]
	A game $G=(\{u_i\}_{i\in[m]},A)$ is said to be $\gamma$-varied if 
	\begin{equation*}
	\forall i,j,k, x, |\triangle_i u_j(x)-\triangle_i u_k(x)|\leq \gamma.
	\end{equation*}
\end{definition}

Then we have an analog observation for fault-tolerance with respective to envy-proofness in $\gamma$-varied games as follows.

\begin{theorem}\label{thm:variedEnvyfree}
	Let $G$ be an $m$-player $\gamma$-varied game. Then every Nash equilibrium of $G$ is $(\gamma t,t)$-coalitional envy-proof.
\end{theorem}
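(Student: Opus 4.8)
The plan is to reduce a $t$-player coalitional deviation to a sequence of $\ell\le t$ single-player deviations and to invoke the $\gamma$-varied inequality once per step. Fix a Nash equilibrium $\mathbf{x}$ of $G$ (the equilibrium property is only used to pin down the profiles of interest; as with the immunity half of \theoremref{sensitiveImmune}, the bound itself comes purely from the $\gamma$-varied structure). Fix a coalition $S=\{s_1,\dots,s_\ell\}$ with $\ell\le t$, a joint deviation $x_S'\in A^{|S|}$, a coalition member $i\in S$, and an outsider $j\notin S$. Introduce the hybrid product profiles $\mathbf{x}^{(0)}=\mathbf{x},\mathbf{x}^{(1)},\dots,\mathbf{x}^{(\ell)}=(\mathbf{x}_{-S}:x_S')$, where $\mathbf{x}^{(r)}$ is obtained from $\mathbf{x}^{(r-1)}$ by having player $s_r$ switch from $\mathbf{x}_{s_r}$ to $x_{s_r}'$ (the order of the $s_r$ is irrelevant).

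I would then telescope: for every player $p$,
\[
u_p(\mathbf{x}_{-S}:x_S')-u_p(\mathbf{x})=\sum_{r=1}^{\ell}\bigl(u_p(\mathbf{x}^{(r)})-u_p(\mathbf{x}^{(r-1)})\bigr)=\sum_{r=1}^{\ell}\triangle_{s_r}u_p(\mathbf{x}^{(r-1)}),
\]
since the $r$-th summand is exactly the change in $u_p$ caused by the lone deviation of $s_r$ at the profile $\mathbf{x}^{(r-1)}$. Subtracting this identity for $p=j$ from the one for $p=i$ turns the left-hand side of the $(\gamma t,t)$-coalitional envy-proofness inequality into
\[
\bigl(u_i(\mathbf{x}_{-S}:x_S')-u_i(\mathbf{x})\bigr)-\bigl(u_j(\mathbf{x}_{-S}:x_S')-u_j(\mathbf{x})\bigr)=\sum_{r=1}^{\ell}\bigl(\triangle_{s_r}u_i(\mathbf{x}^{(r-1)})-\triangle_{s_r}u_j(\mathbf{x}^{(r-1)})\bigr).
\]
Each summand has the form $\triangle_{s_r}u_i-\triangle_{s_r}u_j$ with a single deviator $s_r$, so the $\gamma$-varied property bounds it by $\gamma$ in absolute value — this holds regardless of whether $s_r=i$, and $s_r\ne j$ automatically because $j\notin S$ — so the whole sum is at most $\gamma\ell\le\gamma t$, which is exactly what is required.

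The one point requiring care is that the $\gamma$-varied definition is stated for pure action profiles $x\in A^m$ and a pure deviation, whereas the hybrids $\mathbf{x}^{(r-1)}$ are product distributions and $x_{s_r}'$ may itself be mixed. I would handle this by expanding $\triangle_{s_r}u_p(\mathbf{x}^{(r-1)})$ as an expectation over the other players' (product) randomness and over the possibly randomized deviation, so that $\triangle_{s_r}u_i-\triangle_{s_r}u_j$ at a mixed configuration is a convex combination of the corresponding differences at pure configurations, each bounded by $\gamma$; linearity of expectation preserves the bound. I expect this bookkeeping — rather than any genuine difficulty — to be the only real obstacle: once the hybrid telescoping is in place, the inequality drops out at once.
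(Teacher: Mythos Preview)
Your proposal is correct and follows essentially the same hybrid/telescoping argument as the paper's proof sketch: decompose the coalition's deviation into single-player steps indexed by $S_\ell=\{k_1,\dots,k_\ell\}$, telescope $u_i-u_j$ across those steps, and bound each of the at most $t$ summands by $\gamma$ via the $\gamma$-varied property. If anything, your write-up is more careful than the paper's, since you explicitly justify why the pure-profile $\gamma$-varied bound transfers to the mixed hybrids by linearity of expectation.
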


\begin{proofsketch}
	Let $\mathbf{x}$ be a Nash equilibrium of $G$. For every $S\subset N, |S|\leq t$, let $S\equiv\{k_1,...,k_t\}$ and  $S_{\ell}\equiv\{k_1,...,k_\ell\}, \forall \ell\in [t]$. Then 	
	for every $x_S'\in A^{|S|}$, and every $i\in S, j\not\in S$
	\begin{align*}
	&  \left( u_i(\mathbf{x}_S:x_S')- u_i(\mathbf{x}) \right) - \left( u_j(\mathbf{x}_S:x_S')- u_j(\mathbf{x})\right)\\ 
	= &  \left( u_i(\mathbf{x}_{-k_1}:x_{k_1}')-u_i(\mathbf{x}) + \sum_{\ell=2}^t  u_i(\mathbf{x}_{-S_{\ell}}:x'_{S_{\ell}})-u_i(\mathbf{x}_{-S_{\ell-1}}:x'_{S_{\ell-1}}) \right)\\
	- &  \left( u_j(\mathbf{x}_{-k_1}:x_{k_1}')-u_j(\mathbf{x}) + \sum_{\ell=2}^t  u_j(\mathbf{x}_{-S_{\ell}}:x'_{S_{\ell}})-u_j(\mathbf{x}_{-S_{\ell-1}}:x'_{S_{\ell-1}}) \right)\\
	\leq & \gamma t
	\end{align*}
	
\end{proofsketch}

Next, we generalize \theoremref{NashEnvyImmune} for the corresponding fault-tolerant notions in multi-player games.

\begin{theorem}\label{thm:eNashEnvyImmune}
	Let $\mathbf{x}$ be a Nash equilibrium in a game $G$. If $\mathbf{x}$ is $(\epsilon_1,t)$-immune and $(\epsilon_2,t)$-coalitional, then $\mathbf{x}$ is $(\epsilon_1+\epsilon_2,t)$-coalitional envy-proof.
\end{theorem}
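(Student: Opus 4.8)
The plan is to unwind the definitions in the same spirit as the two-player case (\theoremref{NashEnvyImmune}), but now tracking the two separate slack parameters $\epsilon_1$ and $\epsilon_2$. Fix a coalition $S \subset N$ with $|S| \le t$, a deviation $x_S' \in A^{|S|}$, an index $i \in S$, and an index $j \notin S$. The goal is to establish
\begin{equation*}
u_i(\mathbf{x}_{-S}:x_S') - u_i(\mathbf{x}) \le u_j(\mathbf{x}_{-S}:x_S') - u_j(\mathbf{x}) + (\epsilon_1 + \epsilon_2).
\end{equation*}
The strategy is to bound the left-hand increment from above and the right-hand increment from below, then combine.

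First I would apply the $(\epsilon_2,t)$-coalitional Nash condition (\definitionref{etNash}) to the deviating player $i \in S$: this gives $u_i(\mathbf{x}_{-S}:x_S') \le u_i(\mathbf{x}) + \epsilon_2$, i.e.\ the ``attacker-side'' increment $u_i(\mathbf{x}_{-S}:x_S') - u_i(\mathbf{x})$ is at most $\epsilon_2$. Second, I would apply the $(\epsilon_1,t)$-immunity condition (\definitionref{etimmune}) to the non-participant $j \notin S$: this gives $u_j(\mathbf{x}_{-S}:x_S') \ge u_j(\mathbf{x}) - \epsilon_1$, i.e.\ the ``victim-side'' increment $u_j(\mathbf{x}_{-S}:x_S') - u_j(\mathbf{x})$ is at least $-\epsilon_1$. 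Chaining these,
\begin{equation*}
u_i(\mathbf{x}_{-S}:x_S') - u_i(\mathbf{x}) \le \epsilon_2 = -\epsilon_1 + (\epsilon_1+\epsilon_2) \le \bigl(u_j(\mathbf{x}_{-S}:x_S') - u_j(\mathbf{x})\bigr) + (\epsilon_1 + \epsilon_2),
\end{equation*}
which is exactly the $(\epsilon_1+\epsilon_2,t)$-coalitional envy-proofness inequality from \definitionref{etenvyfree}. Since $S$, $x_S'$, $i$, and $j$ were arbitrary subject to the stated size and membership constraints, the conclusion follows.

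There is essentially no hard step here — this is a direct generalization of \theoremref{NashEnvyImmune}, where the Nash inequality ($\le 0$) and the immunity inequality ($\ge 0$) sandwich the two utility differences, now loosened by $\epsilon_2$ and $\epsilon_1$ respectively. The only point worth a moment's care is notational: ensuring that the coalition-restriction notation $\mathbf{x}_{-S}:x_S'$ is interpreted consistently across all three definitions (they all restrict the same set $S$), and that $i$ is drawn from $S$ while $j$ lies outside $S$, so that the coalitional-Nash and immunity hypotheses genuinely apply to the indices in question. No mixed-strategy subtleties arise because all quantities are already expected utilities. One could also remark that $t$ is the same in both hypotheses and the conclusion, so no reindexing of coalition sizes is needed.
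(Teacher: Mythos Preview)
Your proof is correct and follows exactly the same approach as the paper's proofsketch: apply the $(\epsilon_2,t)$-coalitional bound to $i\in S$, the $(\epsilon_1,t)$-immunity bound to $j\notin S$, and add the resulting inequalities. If anything, your write-up is slightly cleaner, since the paper's sketch contains a sign typo in the immunity line (it writes $\geq u_j(x)+\epsilon_1$ where $-\epsilon_1$ is intended).
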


\begin{proofsketch} For every $S\subset N$ of size $|S|\leq t$, every $x_S'\in A^{|S|}$, and every $i\in S, j\not\in S$,
by $(\epsilon_2,t)$-coalition,
\begin{equation*}
u_i(\mathbf{x}_{-S}:x'_S)\leq u_i(x)+\epsilon_2,
\end{equation*}
\noindent and by $(\epsilon_1,t)$-immunity,
\begin{equation*}
u_j(\mathbf{x}_{-S}:x'_S)\geq u_j(x)+\epsilon_1.
\end{equation*}
Thus, 
\begin{equation*}
u_i(\mathbf{x}_{-S}:x'_S)- u_i(x) \leq 
u_j(\mathbf{x}_{-S}:x'_S)- u_j(x)+\epsilon_1+\epsilon_2.
\end{equation*}

\end{proofsketch}

Then by \theoremref{eNashEnvyImmune} and \theoremref{sensitiveImmune}, we have the following corollary.

\begin{corollary}
	Let $G$ be a $\gamma$-sensitive game. Then every Nash equilibirum of $G$ is $(4\gamma t,t)$-coalitional envy-proof.
\end{corollary}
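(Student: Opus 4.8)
The plan is to obtain this corollary purely by chaining the two preceding theorems, with no new argument needed. First I would fix an arbitrary Nash equilibrium $\mathbf{x}$ of the given $m$-player $\gamma$-sensitive game $G$. By \theoremref{sensitiveImmune}, $\mathbf{x}$ is simultaneously $(\gamma t, t)$-immune and $(3\gamma t, t+1)$-coalitional. The defining condition of $(3\gamma t, t+1)$-coalitionality quantifies over every coalition $S$ with $|S|\le t+1$, hence in particular over every $S$ with $|S|\le t$; so $\mathbf{x}$ is also $(3\gamma t, t)$-coalitional. Thus $\mathbf{x}$ is a Nash equilibrium that is both $(\gamma t, t)$-immune and $(3\gamma t, t)$-coalitional.

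Then I would invoke \theoremref{eNashEnvyImmune} with $\epsilon_1 = \gamma t$ and $\epsilon_2 = 3\gamma t$: since $\mathbf{x}$ is a Nash equilibrium that is $(\epsilon_1,t)$-immune and $(\epsilon_2,t)$-coalitional, it is $(\epsilon_1+\epsilon_2,t)$-coalitional envy-proof, i.e. $(4\gamma t, t)$-coalitional envy-proof, which is exactly the claim.

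There is no real obstacle here; the entire content is a two-line composition of \theoremref{sensitiveImmune} and \theoremref{eNashEnvyImmune}. The only points needing a moment of care are (i) aligning the coalition-size parameters, that is, downgrading the $(3\gamma t, t+1)$-coalitional guarantee to a $(3\gamma t, t)$-coalitional one so that both hypotheses of \theoremref{eNashEnvyImmune} refer to the same $t$, and (ii) confirming that the additive slacks combine additively, which is precisely what \theoremref{eNashEnvyImmune} asserts. One could also remark that a slightly sharper constant, a bound of $\gamma(4t-3)$, follows by instead applying \theoremref{sensitiveImmune} with parameter $t-1$ to obtain $(3\gamma(t-1), t)$-coalitionality directly, but the stated $4\gamma t$ is the clean form and suffices for the corollary.
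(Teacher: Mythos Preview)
Your proposal is correct and matches the paper's approach exactly: the paper states the corollary as an immediate consequence of \theoremref{sensitiveImmune} and \theoremref{eNashEnvyImmune}, and you have simply spelled out the chaining (including the harmless downgrade from $(3\gamma t,t+1)$-coalitional to $(3\gamma t,t)$-coalitional) that the paper leaves implicit.
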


Another corollary is about the existence of coalitional envy-proof Nash equilibria in anonymous games, in which the utility of each player is a function of his own action and the empirical distribution of the other players’ actions. Formally, a game $G=(\{u_i\}_{i\in[m]},A)$ is anonymous if for all $i$, for all permutation $\sigma:[m]\backslash \{i\}\rightarrow [m]\backslash \{i\}$, for all $x\in A^m$,  $u_i(x)=u_i(x_{\sigma(1)},...,x_{\sigma(i-1)},x_i,x_{\sigma(i+1)},...,x_{\sigma(m)})$.
In \cite{GradwohlRe14}, Gradwohl and Reingold showed that if an $m$-player anonymous game has a constant action space for each player, then the game has an $\epsilon$-Nash equilibrium that is $(\epsilon,t)$-immune and $(4\epsilon,t+1)$-coalitional for $t=O(\sqrt{m})$. Together with \theoremref{eNashEnvyImmune}, we have the following observation.

\begin{corollary}
	For every constant $d\in\NN,\epsilon\in (0,1)$, there exists a constant $c$ such that for every anonymous game $G=(\{u_i\}_{i\in[m]},A)$ with $|A|=d$, for $t=c\sqrt{m}$, $G$ admits a $\epsilon$-Nash equilibrium which is $(5\epsilon,t)$-coalitional envy-proof.
\end{corollary}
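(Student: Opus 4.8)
The plan is to obtain the corollary by directly composing the anonymous-game result of Gradwohl and Reingold quoted just above with \theoremref{eNashEnvyImmune}. First I would invoke the cited fact: for constant $d$ and $\epsilon\in(0,1)$ there is a constant $c$ (depending only on $d$ and $\epsilon$) such that every $m$-player anonymous game $G$ with $|A|=d$ admits an $\epsilon$-Nash equilibrium $\mathbf{x}$ that is simultaneously $(\epsilon,t)$-immune and $(4\epsilon,t+1)$-coalitional for $t=c\sqrt{m}$. This $\mathbf{x}$ is the witness I will use, and the constant $c$ in the statement is exactly the one supplied here.

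Next I would reconcile the coalition-size parameters. Being $(4\epsilon,t+1)$-coalitional means the defining inequality of \definitionref{etNash} holds for all coalitions $S$ of size at most $t+1$; a fortiori it holds for all coalitions of size at most $t$, so $\mathbf{x}$ is also $(4\epsilon,t)$-coalitional. Hence $\mathbf{x}$ is at once $(\epsilon,t)$-immune and $(4\epsilon,t)$-coalitional at the common bound $t=c\sqrt{m}$. I would then apply \theoremref{eNashEnvyImmune} with $\epsilon_1=\epsilon$ and $\epsilon_2=4\epsilon$ to conclude that $\mathbf{x}$ is $(\epsilon_1+\epsilon_2,t)=(5\epsilon,t)$-coalitional envy-proof, which together with the fact that $\mathbf{x}$ is an $\epsilon$-Nash equilibrium is exactly the claimed conclusion.

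The one point that needs a second glance — and the only place the argument could stumble — is that \theoremref{eNashEnvyImmune} is phrased for an \emph{exact} Nash equilibrium, whereas here $\mathbf{x}$ is only an $\epsilon$-Nash equilibrium. Inspecting its proof, however, the exact-Nash hypothesis is never actually used: the derivation invokes only the $(\epsilon_2,t)$-coalitional inequality (which for $i\in S$ already subsumes the relevant individual-deviation bound) and the $(\epsilon_1,t)$-immunity inequality, and simply subtracts them. So the theorem applies verbatim to our profile and no additive loss beyond $\epsilon_1+\epsilon_2=5\epsilon$ is incurred. I therefore expect the proof to be a short, purely parameter-bookkeeping argument rather than one with a genuine technical obstacle.
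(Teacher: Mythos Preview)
Your proposal is correct and follows exactly the approach the paper intends: combine the Gradwohl--Reingold anonymous-game result with \theoremref{eNashEnvyImmune}, using $\epsilon_1=\epsilon$ and $\epsilon_2=4\epsilon$. Your observation that the exact-Nash hypothesis in \theoremref{eNashEnvyImmune} is never invoked (only the immunity and coalitional inequalities are) is a point the paper glosses over, so your write-up is in fact more careful than the paper's one-line justification.
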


%\section{Conclusion and Discussion}

%\section{Next Step of The Study}
%First of all, the definitions and the claims can be generalized in two ways: 1) Multi-player case: Particularly, can we gain any advantage when a large game is talked about? 2) Relaxation on approximations: we will consider $\epsilon$-Nash, $\gamma$-immune, and $\delta$-envy-proof; meanwhile, $\epsilon$-dummy games, $\gamma$-sensitive games, and $\delta$-potential games are concerned. Then, more particular games and applications will be surveyed under these generalized notions. Potential candidates includes some auctions, congestion games and routing games, etc.

% After changing the bib style, delete bbl, blg before compiling
\ifECdraft
	\begin{acks} 
		The author would like to thank Ruta Metha for valuable comments and discussion. 
	\end{acks}
	\bibliographystyle{ACM-Reference-Format}
	\bibliography{main}
\else
	\section*{Acknowledgement} The author would like to thank Ruta Metha for valuable comments and discussion. 	
%	\bibliographystyle{abbrv}
%	\bibliography{main}

\fi

\end{document}